\newcommand{\myeq}[1]{\mathrel{\overset{\makebox[0pt]{\mbox{\normalfont\tiny\sffamily #1}}}{=}}}
\newcommand{\myleq}[1]{\mathrel{\overset{\makebox[0pt]{\mbox{\normalfont\tiny\sffamily #1}}}{\leq}}}
\newcommand*{\Scale}[2][4]{\scalebox{#1}{$#2$}}%
\newcommand{\n}{\noindent}
\theoremstyle{plain}
\newtheorem{thm}{Theorem}
\newtheorem{lemma}{Lemma}
\newtheorem{Cor}{Corollary}
\theoremstyle{definition}
\newtheorem{remark}{Remark}
\newtheorem{Def}{Definition}
\theoremstyle{definition}
\newtheorem{example}{Example}
\setlist[enumerate]{leftmargin = 9pt}
\setlist[itemize]{leftmargin = 10pt}
\begin{document}

\setlength\belowdisplayskip{2.5pt}
\setlength\abovedisplayskip{2.5pt}
\allowdisplaybreaks

\sloppy
 
\title{Cache Aided Wireless Networks: Tradeoffs between Storage and Latency}
\author{\IEEEauthorblockN{Avik Sengupta}
\IEEEauthorblockA{Hume Center, Department of ECE\\
Virginia Tech, \\Blacksburg, VA 24060, USA \\
Email: aviksg@vt.edu}
\and
\IEEEauthorblockN{Ravi Tandon}
\IEEEauthorblockA{Department of ECE\\
University of Arizona,\\ Tucson, AZ 85721 USA \\
Email: tandonr@email.arizona.edu}
\and
\IEEEauthorblockN{Osvaldo Simeone}
\IEEEauthorblockA{CWCSPR, Department of ECE\\
New Jersey Institute of Technology, \\Newark, NJ 07102 USA \\
Email: osvaldo.simeone@njit.edu } }

\maketitle 
\thispagestyle{plain}
\pagestyle{plain}
\begin{abstract} 
We investigate the fundamental information theoretic limits of cache-aided wireless networks, in which edge nodes (or transmitters) are endowed with caches that can store popular content, such as multimedia files. This architecture aims to localize popular multimedia content by proactively pushing it closer to the edge of the wireless network, thereby alleviating backhaul load.  An information theoretic model of such networks is presented, that includes the introduction of a new  metric, namely normalized delivery time (NDT), which captures the worst case time to deliver any requested content to the users. We present new results on the trade-off between latency, measured via the NDT, and the cache storage capacity of the edge nodes. In particular, a novel information theoretic lower bound on NDT is presented for cache aided networks. The optimality of this bound is shown for several system parameters.  
\end{abstract}

\begin{IEEEkeywords}
Caching, 5G, degrees of freedom, latency.
\end{IEEEkeywords}

\section{Introduction}
Edge processing is one of the emerging trends in the evolution of 5G networks \cite{cache_5g}. It refers to the utilization of locally stored content and computing resources at the network edge, i.e., closer to the users. 
Such localization is particularly appealing for both low-latency or location-based applications as well as multimedia transmissions. A network architecture with edge processing capability is shown in Fig. \ref{fig:ITmodel}. Here, edge nodes (ENs), such as base stations or eNodeBs in LTE, are equipped with local caches which can store popular content, most notably multimedia files. The local availability of popular content at the network edge has the potential of reducing the delivery latency as well as the overhead on backhaul connections to content servers. As a result, cache enabled networks have been studied extensively in recent literature \cite{Maddah-Ali, Femto-journal,aviksg-iswcs,aviksg-isit15,aviksg-tifs,Molisch-onecache}.

In this paper, we investigate \emph{cache-aided wireless networks}, where ENs are endowed with  caching capability to store popular content locally. 
The design of cache-aided wireless networks involves two key design questions: a) \textit{what to cache}, i.e., how should the storage at ENs be utilized, and which content must be stored; and b) \textit{how to efficiently deliver} the requested content to the users by leveraging the caches at the ENs. The design of \textit{caching policies} is typically done at the long time scale at which users' preferences are invariant and can span many transmission intervals, each corresponding to a set of requests from the users. Hence, the caching policy must be agnostic to the demands of the users as well as to the instantaneous wireless channel conditions. Instead, efficient delivery of requested content to users in each transmission interval calls for the \textit{design of transmission policies} that utilize the available wireless channel state information (CSI) at the ENs and the instantaneous demands of the users. 
\begin{figure}[t]
\centering \hspace{-15pt}
\subfigure[]{
\includegraphics[width=1.7in,height=1.5in]{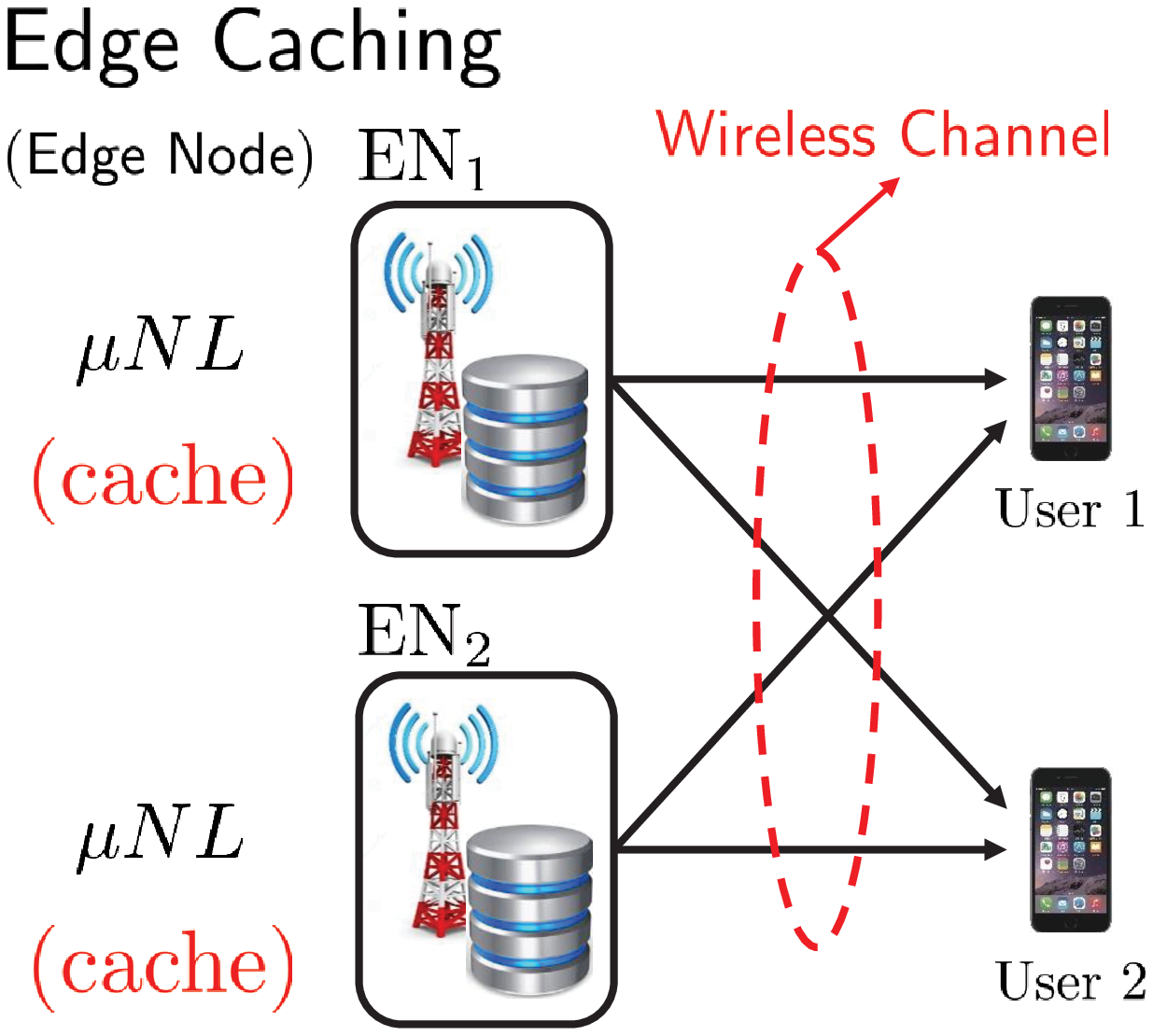}
\label{fig:ITmodel}
}\vspace{-10pt}\\
\subfigure[]{
\includegraphics[width=2.8in,height=1.75in]{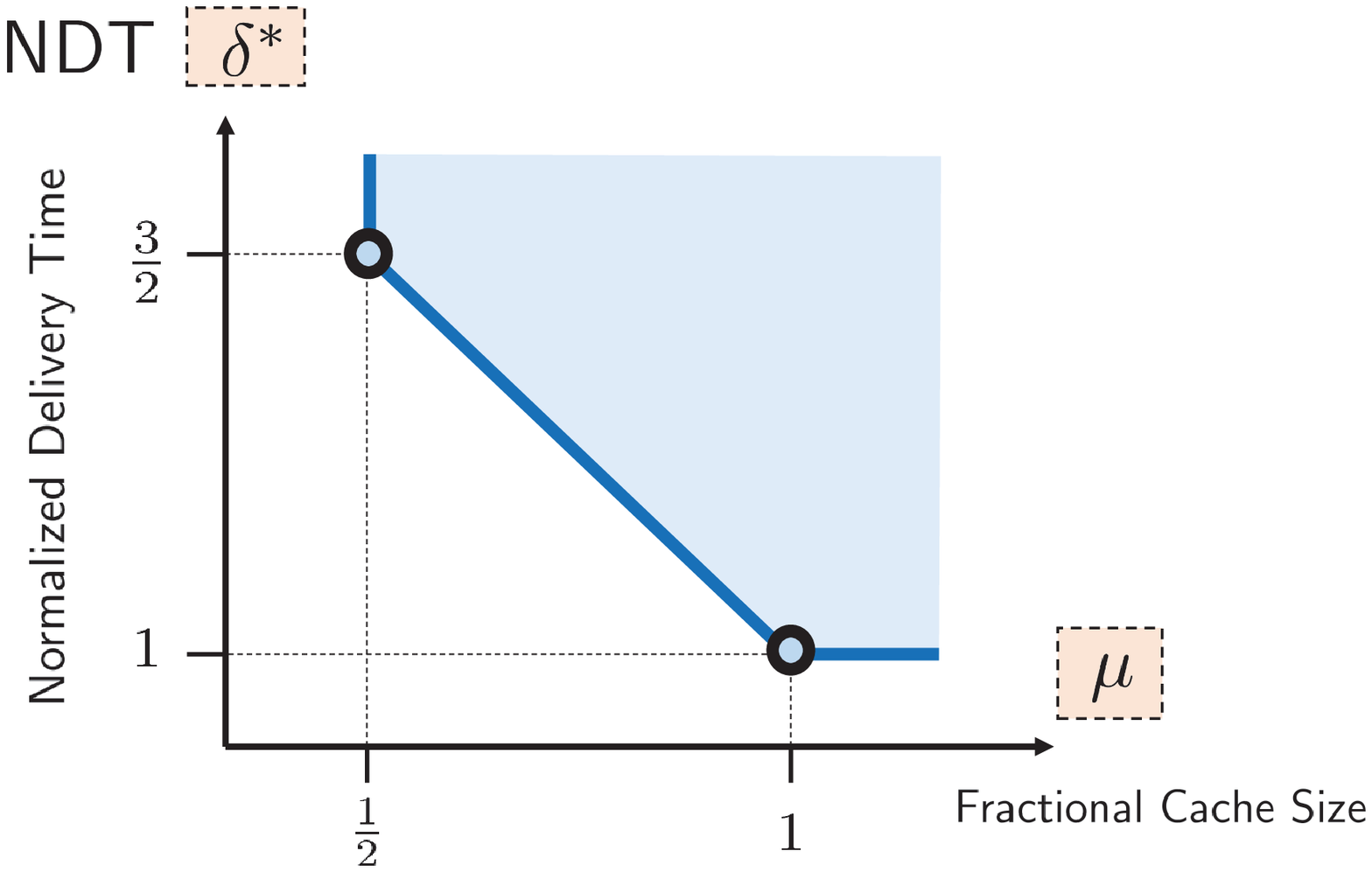}
\label{fig:MK22}
}\vspace{-8pt}
\caption{{(a)} Information-theoretic model for edge caching for $M=2$ ENs serving $K=2$ users; {(b)} Trade-off between the introduce metric of normalized delivery time (NDT), $\delta^{*}$, and the fractional cache size $\mu$ with full CSI at ENs and users.} \vspace{-19pt}
\end{figure}
We first present an information theoretic modeling of cache-aided wireless networks that succinctly captures its new design aspects and constraints. We then develop a new performance measure for such networks termed the \textit{Normalized Delivery Time} (NDT), which measures the worst-case latency incurred by a cache-aided wireless network relative to an ideal system with unlimited caching capability and interference-free links to the users. This facilitates a latency centric analysis of the high signal-to-noise ratio (SNR) degrees-of-freedom (DoF) performance of the system.  

\begin{example}
To illustrate the NDT performance metric, consider the set-up of Fig. \ref{fig:ITmodel}, in which two ENs, labeled $\textrm{EN}_1$ and $\textrm{EN}_2$ are deployed to serve two users over a shared wireless channel. We assume that there is a library of $N$ popular files, each of size $L$ bits, and each EN can cache at most $\mu NL$ bits. In other words, $\mu \in [0,1]$ denotes the \emph{fractional cache size}, i.e., the  ratio between the available per-file storage at an EN and the total size of all the files. For the example shown in Fig. \ref{fig:ITmodel}, the information-theoretically optimal trade-off between NDT $\delta^{*}(\mu)$ and the fractional cache size $\mu$ is shown in Fig. \ref{fig:MK22}. To explain the operating points on this curve, first consider $\mu=1$, i.e., the case when both ENs can store all files, and full cooperative transmission is possible from the ENs, i.e., via zero-forcing beamforming for any set of users' requests. This yields an NDT of $\delta^{*}(1)=1$, implying that the latency performance is the same as that of the ideal interference free system. On the other hand, at $\mu=1/2$, which is the smallest cache for enabling the delivery of any set of requests, the NDT increases to $\delta^{*}(1/2)= 3/2$, and is achieved via interference alignment, thus revealing the performance loss due to decrease in the fractional cache size. \vspace{-5pt}
\end{example} 

\n \textbf{Related Work:} Cache-aided interference channels were first investigated in a recent work by Maddah Ali and Niesen \cite{MA-CAIC,MA-CAIC-arxiv}, who introduced the problem and investigated it for $M=3$ ENs and $K=3$ users, and presented an upper bound on the NDT for this specific setting of $M=K=3$. However, no attempt was made in \cite{MA-CAIC} to develop lower bounds on NDT or to show the optimality of the scheme. 

\n \textbf{Main Contributions:} To the best of our knowledge, this paper is the first to develop information theoretic lower bounds (converse) on latency in cache-aided wireless networks. The main questions we investigate in this work are the following: \emph{What is the optimal caching-transmission policy as a function of the fractional cache size $\mu$? What is the optimal trade-off between the system performance (measured in terms of NDT), and the fractional cache size $\mu$?}
The main contributions of this paper are as follows:
\begin{itemize}
\item We first present an information-theoretic modeling of cache enabled wireless networks and develop the NDT to measure the latency performance of such networks.  For a class of practically relevant caching policies, namely uncoded caching, with full CSI at the ENs, we develop information theoretic lower bounds on the NDT.
\item We show that the presented lower bounds on the NDT are optimal for the setting of $M=2$ ENs and $K=2$ users. Together with the upper bound in \cite{MA-CAIC}, we partially characterize the NDT trade-off for $M=3$ ENs and the $K=3$ users. In addition, we show that our lower bounds are optimal for extremal values of $\mu$ for general problem parameters.
\item Finally, we investigate the impact of CSI availability at the ENs on the NDT. For the case of $M=2$ ENs and $K=2$ users, we illustrate the impact of delayed or no CSI at the ENs on the resulting NDT. 
\end{itemize} \vspace{-5pt}
\section{System Model}\vspace{-5pt}\label{sec:sysmod}
We consider a $M\times K$ cache-aided wireless network where $M$ ENs are connected to a total of $K$ users. The ENs can cache content from a library of $N$ files, $F_1,\ldots,F_N$, where each file is of size $L$ bits, for some $L\in \mathbb{N}^+$. Formally, the files $F_n$ are independent identically distributed (i.i.d.) as:
\begin{align}
F_n \sim \text{Unif}\{1,2,\ldots, 2^{L}\}, ~~\forall n = 1,\ldots,N.
\end{align}
Each EN is equipped with a cache in which it can store $\mu NL$ bits, where the fraction $\mu$, with $0\leq \mu\leq 1$, is referred to as the \textit{fractional cache size}. It is required that the collective cache size of the $M$ ENs be large enough to completely store the entire library of $N$ files. In this way, all user requests can be completely serviced by the ENs. Thus, we impose the condition that $M\times \mu NL \geq N L $, i.e., $ \mu \geq {1}/{M}$. Therefore, it suffices to focus on the range $\mu\in[1/M,1]$. In each transmission interval, a user can request any file from the library and these requests are served by the ENs. The channel between $\textrm{EN}_m$ and user $k$, in a given transmission interval is denoted by $h_{km}\in \mathbb{C}$, where $k = 1,\ldots,K$ and $m = 1,\ldots,M$. The coefficients are assumed to be drawn i.i.d. from a continuous distribution and to be time-invariant with each transmission interval.  

\vspace{-5pt}\begin{Def}[Policy]\label{def:pol}
A caching, edge transmission, decoding policy $\pi = (\pi_c,\pi_e,\pi_d)$ is characterized by the following three functions.

\noindent\textit{a)~\underline{Caching Policy $\pi_c$}:} The caching policy is defined by a function, $\pi_c^{m}(\cdot)$, at each edge node $\textrm{EN}_m, ~m=1,2,\ldots,M$, which maps each file to its cache storage
\begin{align}
S_{m,n} \triangleq \pi_c^{m}\left(F_n\right) ~~\forall n=1,2,\ldots,N.
\end{align}
The mapping is such that $H(S_{m,n})\leq \mu L$ in order to satisfy the cache capacity constraints. The total cache content at $\textrm{EN}_m$ is given by: 
\begin{align}
S_m = \left(S_{m,1},S_{m,2},\ldots,S_{m,N}\right).
\end{align}
Note that the caching policy $\pi_c$ allows for arbitrary coding within each file. However, it does not allow for inter-file coding and is hence a special case of a more general caching policy which might allow for arbitrary inter-file coding. Furthermore, the caching policy is kept fixed over multiple transmission intervals and is thus agnostic to user requests and to channel coefficients $h_{km}$. 

\noindent b) \underline{\textit{Edge Transmission Policy} $\pi_e$}: During the delivery phase of each transmission interval, each receiver $k$ can request one of the $N$ files. We denote by $F_{d_k}$, the file demanded by the $k$-th user, where $d_k \in \{1,2,\ldots,N\}$. The demand vector is denoted by $\mathbf{D}\triangleq \left(d_1,d_2,\ldots,d_K\right) \in \{1,2,\ldots,N\}^K$. Knowing the demand vector $\mathbf{D}$, the global CSI 
\begin{align}
\mathbf{H} = \left\{h_{km}: ~{\substack{k=1,\ldots,K\\ m=1,\ldots,M}}\right\},
\end{align}
denoting the channel coefficient between every user and EN, and having access only to its local cache content, $S_m$, the edge-node $\textrm{EN}_m$ uses an edge transmission policy, $\pi^m_e(\cdot)$, which encodes the cache content, $S_m$, to output a codeword 
\begin{align}
\left(X_m[t]\right)_{t=1}^{T^{(\mathbf{D},\mathbf{H})}} = \pi^m_e\left(S_m,\mathbf{D},\mathbf{H}\right),
\end{align}
which is transmitted over the wireless channel. Here, $T^{(\mathbf{D},\mathbf{H})}$ is the duration or block-length, of the edge transmission policy based on a demand vector $\mathbf{D}$ and the channel realization $\mathbf{H}$. An average power constraint of $P$ is imposed on each codeword, i.e.
\begin{align}\label{eq:powcon}
E\left[\big(X_m[t] - E[X_m[t]]\big)^2 \right] \leq P ~~~\forall t.
\end{align}
We assume that full CSI is available at all ENs and users. The issue of performance losses incurred due to degraded CSI is briefly addressed in Section \ref{ssec:degcsi}.

\noindent c) \underline{\textit{Decoding Policy} $\pi_d$}: Each user $k \in \{1,2,\ldots,K\}$, receives a channel output $\left(Y_k[t]\right)_{t=1}^{T^{(\mathbf{D},\mathbf{H})}}$, given by
\begin{align}
Y_k[t] = \sum_{m=1}^{M}h_{km} X_m[t] + n_k[t] ~~~~\forall t,
\end{align}
where the noise $n_k[t] \sim \mathcal{N}(0,1)$ is a zero mean, unit variance Gaussian random variable which is i.i.d. across time and users. Each user has a decoding policy $\pi_d(\cdot)$, which maps these channel outputs, $\left(Y_k[t]\right)_{t=1}^{T^{(\mathbf{D},\mathbf{H})}}$, the receiver demands $\mathbf{D}$ and the channel realization $\mathbf{H}$ to the estimate
\begin{align}
\widehat{F}_{d_k} \triangleq \pi^k_d\Big(\left(Y_k[t]\right)_{t=1}^{T^{(\mathbf{D},\mathbf{H})}},\mathbf{D},\mathbf{H}\Big)
\end{align}
of the requested file ${F}_{d_k}$. The caching, edge transmission and decoding policies together form a policy $\pi = (\pi_c^{m},\pi_e^m,\pi_d^k)$ for the cache-aided wireless network.
The probability of error of the policy $\pi$ is defined as
\begin{align}
P_e = \max_{\mathbf{D}}\max_{k\in \{1,2,\ldots,K\}} \mathbb{P}\left(\widehat{F}_{d_k} \neq {F}_{d_k}\right).
\end{align}
A policy is said to be feasible if, for almost all realizations $\mathbf{H}$ of the channel, i.e., with probability $1$, we have $P_e \rightarrow 0$ when $L\rightarrow \infty$. 
\end{Def}
\begin{Def}(Delivery time per bit)\label{def:delta}
The \textit{average achievable delivery time per bit} for a given feasible policy is defined as 
\begin{align}
\Delta(\mu,P) &= \max_{\mathbf{D}}\limsup_{L\rightarrow\infty} \frac{\mathbb{E}_{\mathbf{H}}\left[T^{(\mathbf{D},\mathbf{H})}\right]}{L}, 
\end{align}
where the expectation is over the channel realizations $\mathbf{H}$. 
\end{Def}
While $\Delta(\mu,P)$ generally depends on the power level $P$, as well as on $\mu$, we next define a more tractable metric that reflects the latency performance in the high SNR regime.

\begin{Def}(\textbf{NDT})
For any achievable $\Delta(\mu,P)$, the \textit{normalized delivery time} (NDT), is defined as
\begin{align} \label{eq:ndt1}
\delta(\mu) = \lim_{\substack{P\rightarrow \infty}}\frac{\Delta(\mu,P)}{1/\log P}.
\end{align}
Moreover, for a given $\mu$, the minimum NDT is defined as 
\begin{align}\label{eq:ndt}
\delta^*(\mu) = \inf \left\{\delta(\mu):\delta(\mu) ~\text{is achievable} \right\}.
\end{align}
\end{Def}
\begin{remark}\label{rem:1}
The delivery time per bit $\Delta(\mu,P)$ is normalized by the term $1/\log P$. This is the delivery time per bit in the high SNR regime for an ideal baseline system with no interference and unlimited caching, in which each user can be served by a dedicated EN which has locally stored all the files. An NDT of $\delta^*$ indicates that the worst-case time required to serve any possible request $\mathbf{D}$, is $\delta^*$ times larger than the time needed by this ideal baseline system. 
\end{remark}
\begin{remark}\label{rem:2}
We observe that the NDT in \eqref{eq:ndt} is proportional to the inverse of the more conventional degrees of freedom (DoF) metric $\text{DoF}(\mu)$ defined in \cite{MA-CAIC,MA-CAIC-arxiv}, namely $\delta^{*}(\mu)=K/\text{DoF}(\mu)$. In this paper, we opted for definition \eqref{eq:ndt}, rather than resorting to the DoF metric, as we believe that it more clearly reflects the operational meaning in terms of delivery latency. We also recall that \cite{MA-CAIC,MA-CAIC-arxiv} adopted the metric $1/\text{DoF}(\mu)$ based on the observation that the latter is a convex function of $\mu$, unlike the function $\text{DoF}(\mu)$. Finally, we note that the NDT can be extended to more general scenarios for which a direct functional dependence with the DoF cannot be established \cite{RT_ISIT16}.
\end{remark}

\begin{remark}\label{rem:3}
 Following the same arguments in \cite{MA-CAIC,MA-CAIC-arxiv}, it can be seen that the minimum NDT, $\delta^{*}(\mu)$, is a convex function of $\mu$. In fact, consider any two caching policies $\pi_1$, requiring storage $\mu_1$, and $\pi_2$, requiring storage $\mu_2$. Given a system with storage $\mu = \alpha \mu_1 + (1-\alpha)\mu_2$, for any $\alpha \in [0,1]$, the system can then operate according to policy $\pi_1$ using an $\alpha$-fraction of the cache and of time on the channel to the users, and with policy $\pi_2$ for the remaining part of the cache and of time, achieving an NDT of $\delta^*(\mu)\leq \alpha\delta^*(\mu_1) + (1-\alpha)\delta^*(\mu_2)$. Thus, the convexity of $\delta^*(\mu)$ follows from the possibility of implementing the outlined cache-sharing and time-sharing scheme.
\end{remark}
\section{Main Results and Discussion}
In this work, we aim to provide fundamental limits for the NDT of an $M\times K$ cache-aided wireless network. To this end, an information theoretic lower-bound on the NDT of the system is presented in the following section under the assumption of perfect CSI at ENs and users. Section \ref{ssec:degcsi}, instead, briefly discusses the impact of imperfect CSI at the ENs.
\subsection{Lower Bounds on NDT with Perfect CSI at the ENs}
In this section, we consider cache-aided wireless networks where perfect CSI is present at the ENs and users. The following Theorem presents an information-theoretic lower bound on the NDT.
\begin{thm}\label{th:1}
For a cache-aided wireless network with $M$ ENs, each with a fractional cache size $\mu \in [1/M,1]$, $K$ users and a library of $N\geq K$ files and with perfect CSI at both ENs and users, the NDT is lower bounded as
\begin{align}\label{eq:th1}
\delta^*(\mu)\geq \max_{\ell \in 1,\ldots,\min\{M,K\}} \frac{K - (M - \ell)^+(K - \ell)^+ \mu}{\ell},
\end{align}
where the function $(x)^+$ is defined as $(x)^+ = \max\{0,x\}.$ 
\end{thm}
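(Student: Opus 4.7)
The plan is to prove, for each fixed $\ell\in\{1,\ldots,\min(M,K)\}$, the inequality $\delta^{*}(\mu)\geq (K-(M-\ell)^+(K-\ell)^+\mu)/\ell$, and then take the maximum over $\ell$ to obtain the theorem. Fix such an $\ell$ and a feasible policy $\pi$ achieving average NDT $\delta$, so that the delivery time obeys $\mathbb{E}_{\mathbf{H}}[T^{(\mathbf{D},\mathbf{H})}]\leq \delta\,L/\log P + o(L/\log P)$. I would single out an $\ell$-subset of users $\mathcal{U}\subseteq[K]$ and an $\ell$-subset of edge nodes $\mathcal{T}\subseteq[M]$, with complements $\bar{\mathcal{U}}$ of size $K-\ell$ and $\bar{\mathcal{T}}$ of size $M-\ell$, and consider a worst-case demand $\mathbf{D}$ in which all $K$ users request $K$ distinct files (feasible because $N\geq K$).

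The crux is a genie-aided enhancement that exploits the uncoded structure of the caching policy. I would equip the users in $\mathcal{U}$ with the cache fragments $\{S_{m,n}:m\in\bar{\mathcal{T}},\,n\in\mathrm{Dem}(\bar{\mathcal{U}})\}$, i.e., the cached bits of the $K-\ell$ files demanded by the outside users and stored at the $M-\ell$ outside edge nodes. Under uncoded caching, each $S_{m,n}$ satisfies $H(S_{m,n})\leq \mu L$, so this side information totals at most $(M-\ell)(K-\ell)\mu L$ bits. Intuitively, it allows the users in $\mathcal{U}$ to cancel (in a degrees-of-freedom sense) the contributions of edge nodes in $\bar{\mathcal{T}}$ that encode the $K-\ell$ non-demanded files, reducing the effective problem to an $\ell\times\ell$ MIMO link with DoF at most $\ell$ per channel use. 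Combining Fano's inequality for the aggregate $KL$ bits of demanded information with this DoF bound then yields
\begin{align*}
KL\bigl(1-o(1)\bigr)\;\leq\;\ell\,T\,\log P \;+\; (M-\ell)(K-\ell)\,\mu\,L \;+\; o(\log P).
\end{align*}
Dividing by $L\log P$ and taking $L\to\infty$ followed by $P\to\infty$ recovers $\delta\geq (K-(M-\ell)(K-\ell)\mu)/\ell$; maximizing over $\ell$ gives the theorem.

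The main obstacle I anticipate is promoting the Fano statement from its natural single-round $\ell L$ bound to the aggregate $KL$ bound that appears on the left-hand side above. This is the step that relies most heavily on the uncoded-caching hypothesis and on the fact that the cache-based genie is static across demand realizations. It will likely require either a symmetrization over all $\binom{K}{\ell}$ choices of $\mathcal{U}$ or a carefully chosen sequence of $\lceil K/\ell\rceil$ rotated demand vectors, in which the same genie allows the channel outputs observed at $\mathcal{U}$ to collectively deliver the entire library of $K$ files. Correctly accounting for the high-SNR DoF of the reduced channel and bounding the genie entropy via the clean per-file decomposition $S_m=(S_{m,1},\ldots,S_{m,N})$ are the two technical ingredients that should render the bound tight in the sense of the theorem.
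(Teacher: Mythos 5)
Your genie and its entropy accounting are the right ones: conditioning on the undemanded files and on the $\ell$ decodable files, the only fresh information in the $(M-\ell)^+$ outside caches is the fragments $S_{m,n}$ of the $(K-\ell)^+$ files demanded by the unobserved users, worth at most $(M-\ell)^+(K-\ell)^+\mu L$ bits by the no-inter-file-coding assumption. The target inequality $KL \leq \ell T\log P + (M-\ell)^+(K-\ell)^+\mu L + o(\cdot)$ is also exactly what the paper proves. But the step you flag as ``the main obstacle'' --- promoting the left-hand side from $\ell L$ to $KL$ --- is the heart of the proof, and the two workarounds you propose do not close it. The correct mechanism, which your write-up does not contain, is that the $\ell$ observed channel outputs $\mathbf{Y}^T_{[1:\ell]}$ together with the reconstructed transmit signals $\mathbf{X}^T_{[1:(M-\ell)^+]}$ determine the channel outputs $\mathbf{Y}^T_{[\ell+1:K]}$ of the \emph{remaining} $K-\ell$ users up to an additive noise term $\tilde{\mathbf{n}}^T = (\mathbf{H}_2\mathbf{H}_1^{\dagger})\mathbf{n}^T_{[1:\ell]}$, because the $\ell\times\ell$ submatrix $\mathbf{H}_1$ of the channel is almost surely invertible. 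This extra noise costs only $T\log\det(\mathbf{I}+\tilde{\mathbf{H}}\tilde{\mathbf{H}}^H)$ bits --- a constant per channel use, independent of $P$ --- so Fano can be applied to all $K$ requested files from the single tuple $(\mathbf{Y}^T_{[1:\ell]}, S_{[1:(M-\ell)^+]}, F_{[K+1:N]})$, putting $KL$ on the left while the channel-output entropy on the right remains $\ell T\log P$. Without this reconstruction step your ``$\ell\times\ell$ MIMO'' reduction only certifies delivery of the $\ell$ observed users' files.

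Concretely, both of your fallbacks yield a strictly weaker bound. Symmetrizing the single-round inequality $\ell L \leq \ell T\log P + (M-\ell)(K-\ell)\mu L$ over all $\binom{K}{\ell}$ subsets just reproduces the same inequality, giving $\delta^*(\mu)\geq 1 - (M-\ell)(K-\ell)\mu/\ell$; using $\lceil K/\ell\rceil$ rotated demand vectors accumulates $\lceil K/\ell\rceil$ copies of both the $\ell T\log P$ term and the genie term, which again collapses to $1-(M-\ell)(K-\ell)\mu/\ell$ rather than $\bigl(K-(M-\ell)(K-\ell)\mu\bigr)/\ell$. The numerator $K$ over the denominator $\ell$ is obtained precisely because one round of $\ell$ observations suffices for all $K$ files. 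A secondary point: to cancel (or reconstruct) $\mathbf{X}^T_m$ for $m\in\bar{\mathcal{T}}$ you need the entire cache $S_m=(S_{m,1},\ldots,S_{m,N})$, not only the fragments of the outside-demanded files; the remaining fragments come for free only because they are deterministic functions of files that are either conditioned on or have already been decoded via Fano, so the order of these steps in the entropy chain matters and must be made explicit.
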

To the best of the authors' knowledge, Theorem \ref{th:1} provides the first converse for the $M \times K$ cache-aided wireless network. 
The proof of Theorem \ref{th:1} is presented in Appendix \ref{ap:th1}. To provide further insight into the lower bound in Theorem \ref{th:1}, we present here, a short proof sketch. As shown in Appendix \ref{ap:th1}, the channel outputs of $\ell$ users, along with the cache contents of $(M-\ell)^+$ ENs is sufficient in the high-SNR regime to decode any $K$ requested files. By bounding the joint entropy of these random variables and utilizing the cache storage, caching policy and decodability constraints, one obtains the lower bound on the optimal NDT $\delta^*(\mu)$. Varying the parameter $\ell$ leads to the family of lower bounds in Theorem \ref{th:1}.
Based on this lower bound, we next expound on the optimal characterization of $\delta^*(\mu)$ for some cache-aided wireless networks. 
\begin{Cor}\label{cor:corner}
For a cache-aided wireless network with $M$ ENs, each with a fractional cache size $\mu \in [1/M,1]$, $K$ users and a library of $N\geq K$ files, we have
\begin{align}\label{eq:corr11}
\delta^*(\mu) = \frac{M+K-1}{M} ~~~ \text{for}~~ \mu = 1/M,
\end{align}
which can be achieved by leveraging interference alignment techniques for a $M\times K$ X-channel. Further, we have
\begin{align}\label{eq:corr12}
\delta^*(\mu) = \frac{K}{\min\{M,K\}} ~~~ \text{for}~~ \mu = 1,
\end{align}
which can be achieved by using zero-forcing beamforming for a $M\times K$ broadcast channel.
\end{Cor}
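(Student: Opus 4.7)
The plan is to prove equality at each of the two corner points by combining a specialization of the converse in Theorem~\ref{th:1} with an explicit achievable caching--delivery scheme.

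For the converse at $\mu = 1/M$, I would simply evaluate \eqref{eq:th1} at $\ell = 1$, which gives
\begin{align*}
\delta^*(1/M) \;\geq\; K - \frac{(M-1)(K-1)}{M} \;=\; \frac{M+K-1}{M}.
\end{align*}
At $\mu = 1$, choosing $\ell = \min\{M,K\}$ makes the factor $(M-\ell)^+(K-\ell)^+$ vanish, so the bound collapses to $\delta^*(1) \geq K/\min\{M,K\}$. Thus in each case the corresponding term in the maximum in \eqref{eq:th1} already supplies a matching lower bound, and no additional converse argument is required.

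For achievability at $\mu = 1/M$, the caching policy I would adopt is to split every file $F_n$ into $M$ equal, non-overlapping subfiles of size $L/M$ bits and store the $m$-th subfile at $\textrm{EN}_m$; this exactly meets the cache constraint $H(S_{m,n}) \leq L/M$. In any transmission interval, the delivery phase then reduces to an $M\times K$ X-channel, since $\textrm{EN}_m$ holds an independent private message (the $m$-th piece of $F_{d_k}$) of size $L/M$ for each user~$k$. Invoking the asymptotic interference-alignment result that attains sum-DoF $MK/(M+K-1)$ on the generic X-channel, the total $KL$ demanded bits are delivered in time $L(M+K-1)/(M \log P)$ in the high-SNR limit, giving the claimed NDT $(M+K-1)/M$. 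For achievability at $\mu = 1$, each EN caches the whole library, so delivery is a fully cooperative $M\times K$ MISO broadcast channel with sum-DoF $\min\{M,K\}$, attained by zero-forcing beamforming (together with time-sharing over $M$-subsets of users when $M<K$); the resulting NDT is $K/\min\{M,K\}$.

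The converse side is essentially mechanical; the only nontrivial point I foresee is the bookkeeping in the X-channel step, where one must verify that the symbol-extension structure of the asymptotic interference-alignment scheme is compatible with Definitions~\ref{def:pol} and \ref{def:delta}, so that the per-bit delivery time collapses to $1/\text{DoF}$ in the high-SNR limit as implicitly used in Remark~\ref{rem:2}. Once that correspondence is in place, both directions of Corollary~\ref{cor:corner} follow.
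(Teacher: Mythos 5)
Your proposal is correct and follows essentially the same route as the paper: the converse at each corner point is obtained by evaluating the bound of Theorem~\ref{th:1} at $\ell=1$ (for $\mu=1/M$) and $\ell=\min\{M,K\}$ (for $\mu=1$), and achievability uses the identical file-splitting/X-channel interference-alignment scheme at $\mu=1/M$ and full caching with zero-forcing broadcast at $\mu=1$. Your added caveat about reconciling the symbol-extension structure of asymptotic interference alignment with Definitions~\ref{def:pol} and~\ref{def:delta} is a reasonable point of care, but the paper treats this the same way you do, via the $1/\mathrm{DoF}$ correspondence of Remark~\ref{rem:2}.
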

\begin{proof} To prove the corollary, we show that a policy with a NDT matching the lower bound in Theorem \ref{th:1} can be identified for both $\mu = 1/M$ and $\mu = 1$.

\noindent \underline{\textit{NDT at} $\mu = 1/M$}: For $\mu = 1/M$, we substitute $\ell = 1$ in \eqref{eq:th1} to get
\begin{align}\label{eq:cp1lb}
\delta^*(1/M) \geq K - \frac{(M-1)(K-1)}{M} = \frac{M+K-1}{M}.
\end{align}
To obtain an upper bound on NDT, we consider the following policy. For $\mu = 1/M$, each file can be split into $M$ non-overlapping fragments $F_n = \left(F_{n,1},F_{n,2},\ldots,F_{n,M}\right)$, each of size $L/M$ bits. The fragment $F_{n,m}$ is stored in the cache of $\textrm{EN}_m$ for $n = 1,\ldots,N$ \cite{MA-CAIC}. Thus, the cache storage for each EN is $NL/M$ bits and the total amount of data stored in the caches of all ENs is $NL$ bits. Next, when a file is requested by any user $k$, each of the ENs have a fragment $F_{d_k,m}$ to transmit to the user. The $M\times K$ system then becomes an X-channel for which, a reliable sum-rate of $\frac{MK}{M+K-1}\log(P)$, neglecting $o(\log(P))$ terms, is achievable by interference alignment \cite{CJ_Xch,MA_Xch}. Thus, the achievable delivery time per bit, in Definition \ref{def:delta}, is approximately given by
\begin{align}
&\hspace{-5pt}\Delta(\mu,P)  = \lim_{L\rightarrow\infty}\frac{1}{L}\cdot\frac{KL}{\frac{MK}{M+K-1}\log(P)} = \frac{M+K-1}{M\log(P)}. \vspace{8pt}
\end{align} 
And hence, we have the achievable NDT
\begin{align}
&\delta(\mu) = \lim_{P\rightarrow \infty} \frac{\Delta(\mu,P)}{1/\log(P)} = \frac{M+K-1}{M}.
\end{align}
Thus, we have the upper bound
\begin{align}\label{eq:cp1ub}
& \delta^*(1/M) \leq \frac{M+K-1}{M}.
\end{align}
Combining \eqref{eq:cp1lb} and \eqref{eq:cp1ub} shows that the lower bound in Theorem \ref{th:1} is tight at $\mu=1/M$.

\noindent \underline{\textit{NDT at} $\mu = 1$}: For $\mu=1$, substituting, $\ell = \min\{M,K\}$ into \eqref{eq:th1}, we get \vspace{-5pt}
\begin{align}\label{eq:cp2lb}
\delta^*(1) \geq \frac{K}{\min\{M,K\}}.
\end{align}
When $\mu=1$, each EN has a cache storage of $NL$ bits, i.e., each EN can completely store the entire library f $N$ files. Hence the ENs can cooperatively transmit to the users using broadcast techniques such as zero-forcing to achieve a reliable sum-rate of $\min\{M,K\}\log(P)$, neglecting $o(\log(P))$ terms \cite{BC_Capacity}. Thus, the delivery time per bit is approximately given by 
\begin{align}
&\Delta(\mu,P) = \lim_{L\rightarrow\infty} \frac{KL/L}{{\min\{M,K\}}\log(P)} =  \frac{K/\log(P)}{{\min\{M,K\}}}. 
\end{align} 
And hence, we have the achievable NDT
\begin{align}
&\delta(\mu) = \lim_{P\rightarrow \infty}\frac{\Delta(\mu,P)}{1/\log(P)} =\frac{K}{\min\{M,K\}}.
\end{align}
Thus, we have the upper bound
\begin{align}\label{eq:cp2ub}
&~\delta^*(1) \leq \frac{K}{\min\{M,K\}}.
\end{align}
Combining \eqref{eq:cp2lb} and \eqref{eq:cp2ub}, shows that the lower bound in Theorem \ref{th:1} is tight at $\mu=1$.
\end{proof}
\noindent Based on the results of Corollary \ref{cor:corner}, we establish the optimal NDT for a system with $M=K=2$ as stated in the following corollary. 

\begin{Cor}\label{cor:2}
For a cache-aided wireless network with $M = 2$ ENs, $K = 2$ users and $N\geq 2$ files, the optimal NDT is given by 
\begin{align}\label{eq:cor2}
\delta^*(\mu) = 2 -  \mu, ~~~\forall \mu \in [1/2,1]. 
\end{align}
\end{Cor}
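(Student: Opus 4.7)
The plan is to sandwich $\delta^*(\mu)$ between matching lower and upper bounds whose endpoint values have already been established in Corollary~\ref{cor:corner}, and use convexity to fill in the interior of the interval.

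First, I would specialize the lower bound of Theorem~\ref{th:1} to $M=K=2$, evaluating the right-hand side over the admissible values $\ell \in \{1,2\}$. With $\ell = 1$ the bound reads $(K - (M-1)(K-1)\mu)/1 = 2 - \mu$; with $\ell = 2$ the prefactor $(M-\ell)^+(K-\ell)^+$ vanishes and the bound reduces to $K/2 = 1$. Since $2 - \mu \geq 1$ on $\mu \in [1/2, 1]$, the $\ell = 1$ bound is active throughout and yields $\delta^*(\mu) \geq 2 - \mu$.

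For the matching upper bound I would invoke Corollary~\ref{cor:corner} together with the cache-sharing/time-sharing argument of Remark~\ref{rem:3}. Corollary~\ref{cor:corner} supplies two achievable schemes at the endpoints: an X-channel interference-alignment strategy attaining $\delta(\mu) = 3/2$ at $\mu = 1/2$, and a zero-forcing cooperative beamforming scheme attaining $\delta(\mu) = 1$ at $\mu = 1$. Writing any $\mu \in [1/2, 1]$ as the convex combination $\mu = \alpha \cdot (1/2) + (1 - \alpha) \cdot 1$ with $\alpha = 2(1-\mu) \in [0,1]$, running the interference-alignment scheme on an $\alpha$-fraction of per-EN cache storage and an $\alpha$-fraction of channel uses, and the zero-forcing scheme on the remaining fractions, achieves NDT at most $\alpha \cdot (3/2) + (1-\alpha)\cdot 1 = 1 + \alpha/2 = 2 - \mu$. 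Hence $\delta^*(\mu) \leq 2 - \mu$, and combining this with the lower bound proves $\delta^*(\mu) = 2 - \mu$ for every $\mu \in [1/2, 1]$.

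There is no serious obstacle: all the heavy lifting — namely the information-theoretic converse of Theorem~\ref{th:1} and the interference-alignment and zero-forcing achievabilities — is already done. The only substantive check is verifying that among the two choices of $\ell$, the $\ell = 1$ bound dominates over the entire interval, which reduces to the trivial inequality $2 - \mu \geq 1$ for $\mu \leq 1$.
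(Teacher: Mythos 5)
Your proposal is correct and follows essentially the same route as the paper: the $\ell=1$ instance of Theorem~\ref{th:1} gives the converse $\delta^*(\mu)\geq 2-\mu$, and the upper bound comes from time- and cache-sharing (Remark~\ref{rem:3}) between the two corner-point schemes of Corollary~\ref{cor:corner}, whose values $3/2$ at $\mu=1/2$ and $1$ at $\mu=1$ lie exactly on the line $2-\mu$. The arithmetic, including the check that the $\ell=1$ bound dominates the $\ell=2$ bound on $[1/2,1]$, is all in order.
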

For this  $2$-EN, $2$-user system, the two corner points $\mu=1/2$ and $\mu=1$ are achievable as per Corollary \ref{cor:corner}. Instead, for $\mu=1/2$, the system is a $2-$user X-channel which has a sum-DoF of $4/3$, i.e., $\delta(1/2) = 3/2$. Again, at $\mu=1$, the system becomes a broadcast channel which has a sum-DoF of $2$, i.e., $\delta(1) = 1$. All points on the line joining these two achievable points can be achieved through cache and time sharing between the two schemes as stated in Remark \ref{rem:3} in Section \ref{sec:sysmod}. Next, considering the lower bound from Theorem \ref{th:1} and using $\ell=1$, we get \eqref{eq:cor2}, which is the line joining the two achievable corner points. Thus, Theorem \ref{th:1} completely characterizes the optimal NDT $\delta^*(\mu)$ of the cache-aided wireless network with $M=K=2$ and $N\geq 2$. This is illustrated in Fig. \ref{fig:MK22}. 

We next present an application of Theorem \ref{th:1} to obtain a partial characterization of the optimal NDT for a system with $M=3$ ENs and $K=3$ users. 
\begin{Cor}\label{cor:3}
For a cache-aided wireless network with $M = 3$ ENs, $K = 3$ users and $N\geq 3$ files, we have
\begin{align}\label{eq:cor3}
&\delta^*(\mu) = \begin{cases}
								5/3~~~~~~~~~~~~~~~~~~~~~~\text{for} ~~ \mu = 1/3,\\
								3/2 - \mu/2~~~~~~~~~~~~~~\text{for} ~~ 2/3\leq \mu \leq 1\\
								\end{cases} \nonumber\\
& 3 - 4\mu \leq \delta^*(\mu)	\leq 13/6 - 3\mu/2 ~\text{for} ~~ 1/3\leq \mu \leq 2/3.				
\end{align}
\end{Cor}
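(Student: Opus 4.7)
The plan is to establish Corollary \ref{cor:3} by handling its three regimes separately: the single point $\mu = 1/3$, the high-storage range $\mu \in [2/3, 1]$, and the middle range $\mu \in [1/3, 2/3]$. In every case the converse will come from specializing Theorem \ref{th:1} to $M = K = 3$ with an appropriate choice of $\ell$, while the achievability will combine the corner-point schemes of Corollary \ref{cor:corner}, the $\mu = 2/3$ scheme reported in \cite{MA-CAIC}, and the cache/time-sharing construction of Remark \ref{rem:3}.

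For $\mu = 1/3 = 1/M$, the value $\delta^*(1/3) = (M+K-1)/M = 5/3$ is immediate from \eqref{eq:corr11}. For $\mu \in [2/3,1]$, I would first set $\ell = 2$ in Theorem \ref{th:1}: since $(M-\ell)^+(K-\ell)^+ = 1$, this yields $\delta^*(\mu) \geq (3-\mu)/2 = 3/2 - \mu/2$. For the matching upper bound, Corollary \ref{cor:corner} gives NDT $1$ at $\mu = 1$ via zero-forcing, while \cite{MA-CAIC} provides a scheme achieving NDT $7/6$ at $\mu = 2/3$; the cache-sharing/time-sharing construction of Remark \ref{rem:3} then traces out the entire segment joining $(2/3, 7/6)$ and $(1,1)$, which is precisely $3/2 - \mu/2$.

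For $\mu \in [1/3, 2/3]$, the lower bound $\delta^*(\mu) \geq 3 - 4\mu$ follows from Theorem \ref{th:1} with $\ell = 1$, where $(M-1)(K-1) = 4$. For the upper bound, I would cache- and time-share between the achievable operating points $(1/3, 5/3)$ (X-channel interference alignment, as in Corollary \ref{cor:corner}) and $(2/3, 7/6)$ (from \cite{MA-CAIC}); the straight line through these two points has equation $13/6 - 3\mu/2$, which delivers $\delta^*(\mu) \leq 13/6 - 3\mu/2$.

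The main obstacle is the achievability of NDT $7/6$ at $\mu = 2/3$: this is the only genuinely nontrivial ingredient, imported from \cite{MA-CAIC}, and without it neither the tight characterization on $[2/3,1]$ nor the slanted upper bound on the middle interval could be obtained. Once this single operating point is in hand, every remaining step reduces to choosing the correct $\ell$ in Theorem \ref{th:1} and invoking the convex-combination argument of Remark \ref{rem:3}.
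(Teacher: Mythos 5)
Your proposal is correct and follows essentially the same route as the paper: lower bounds from Theorem \ref{th:1} with $\ell=1$ and $\ell=2$, and upper bounds obtained from the corner points of Corollary \ref{cor:corner} together with the $(\mu,\delta)=(2/3,7/6)$ operating point of \cite{MA-CAIC}, connected by the cache/time-sharing argument of Remark \ref{rem:3}. The only cosmetic difference is that the paper cites the achievable NDT directly from \cite[Theorem 2]{MA-CAIC}, whereas you explicitly reconstruct it as the convex hull of the three achievable points; the resulting piecewise-linear upper bound is identical.
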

\begin{figure}[t]
\begin{centering}
\hspace{-2pt}\includegraphics[width=3.5in, height=2.25in]{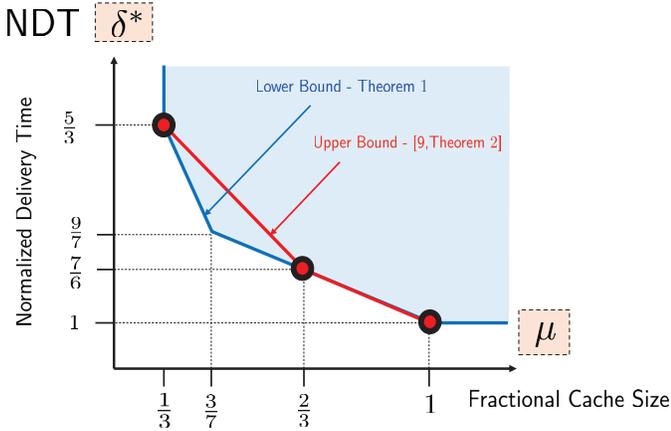}
\vspace{-8.25pt}
\par\end{centering}
\protect\caption{Lower and upper bounds on the NDT for a cache-aided wireless network with $M=3$ ENs and $K=3$ users.}
\label{fig:MK33}\vspace{-5pt}
\end{figure}%

The bounds in Corollary \ref{cor:3} are illustrated in Fig. \ref{fig:MK33}. The lower bounds on NDT used in Corollary \ref{cor:3} are obtained from Theorem \ref{th:1}, by setting $M=K=3$ system, yielding
\begin{align}
\delta^*(\mu)&~\geq~ 3 - 4\mu ~~~~~~~~~\textit{for} ~~~\ell = 1,\\
\delta^*(\mu)&~\geq~ 3/2 - \mu/2 ~~~~~\textit{for} ~~~\ell = 2, \\
\delta^*(\mu)&~\geq~ 1~~~~~~~~~~~~~~~~\textit{for} ~~~\ell = 3.
\end{align}
As for upper bounds, we adapt the results in \cite[Theorem 2]{MA-CAIC} to obtain the following achievable NDT:
\begin{align}\label{eq:MA33}
\delta^*(\mu) \leq \begin{cases}
									13/6 - 3\mu/2~~~~\textit{for}~ 1/3\leq \mu\leq 2/3,\\
									3/2 - \mu/2~~~~~~~\textit{for}~ 2/3\leq \mu\leq 1.
								 \end{cases}
\end{align}
The two corner points for $\mu=1/3$ and $\mu=1$ of the achievable NDT in \eqref{eq:MA33} are achieved similar to Corollary \ref{cor:corner}. The inner point at $\mu = 2/3$, instead uses a novel interference alignment and zero-forcing scheme to achieve a $\delta(\mu) = 7/6$ \cite{MA-CAIC}. It can be seen from Fig. \ref{fig:MK33} that the lower bound coincides with the upper bound at $\mu=1/3$ and for the range $2/3\leq\mu\leq 1$. Hence, the proposed lower bound in conjunction with the recent result from \cite{MA-CAIC}, partially characterizes the optimal NDT versus $\mu$ trade-off for the $M=K=3$ system as summarized in Corollary \ref{cor:3}. For the regime $1/3\leq \mu\leq 2/3$, characterizing the optimal NDT remains an open problem. 

\subsection{Impact of Imperfect CSI on the NDT Trade-off}\label{ssec:degcsi}
In this section, we investigate the impact of CSI availability at the ENs and its impact on the NDT. When CSI is delayed, at time $t$, ENs only have access to $\mathbf{H}_1,\mathbf{H}_2,\ldots,\mathbf{H}_{t-1}$, i.e., the CSI of the previous $t-1$ slots. For illustration, we consider the system with $M=K=2$ and $N\geq 2$ with $\mu\in[1/2,1]$. For the case of perfect CSI the optimal NDT can be characterized as in Fig. \ref{fig:MK22}. Next we look at the achievable NDT for the case of delayed and no CSI respectively.

\noindent \textit{a)} \underline{\textit{ Delayed CSI:}} For the case of delayed CSI, consider the corner point $\mu=1/2$ where the system behaves like a $2\times 2$ X-Channel. It is known for the $2\times 2$ X-Channel with delayed CSI that a sum-DoF of $6/5$ is achievable \cite{delayed_csit_Xch}. As a result, an NDT of $\delta(\mu) = 5/3$ is achievable. Compared to the perfect CSI case, the NDT thus incurs a loss due to delayed CSI. Next, consider the corner point $\mu=1$, where the system reduces to a $2\times 2$ broadcast channel with delayed CSI. It is known that for such a system, a sum-DoF of $4/3$ is achievable \cite{delayed_csit_BC}, i.e., a NDT of $\delta(\mu) = 3/2$ is achievable. The optimality of this trade-off is, again, an open problem. However, the achievability illustrates the loss incurred due to delay in CSI availability. 
\begin{figure}[t]
\begin{centering}
\includegraphics[width=2.5in,height=2.3in]{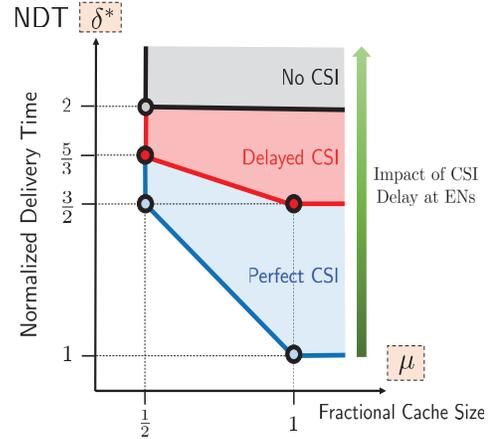}
\par\end{centering}
\protect\caption{ Effect of delayed or no CSI on the NDT for $M=K=2$.}
\label{fig:delayed}
\end{figure}

\noindent \textit{b)} \underline{\textit{No CSI:}} In case of no CSI, it is known that the optimal scheme is transmit using time-division to each user in a separate slot \cite{Vaze_BC_Dof}. Therefore a sum-DoF of $1$ can be achieved, i.e., an NDT of $2$ can be achieved which is optimal for all values of $\mu\in [1/2,1]$. The NDT trade-offs for perfect, delayed and no CSI are shown in Fig. \ref{fig:delayed}.

\section{Conclusions}\label{sec:conc}
In this paper, we studied the fundamental information-theoretic limits of cache-aided wireless networks where network edge nodes are endowed with cache storage. We first proposed an information-theoretic model for such a network and we introduced the metric of normalized delivery time (NDT), which captures the worst-case latency in delivering file requests to users. We presented the first known information theoretic lower bounds for a general $M\times K$ cache-aided wireless networks with perfect CSI. Based on this result, we showed that the optimal NDT for some system parameters can be characterized by the use of known transmission schemes such as interference alignment and zero-forcing beamforming. Finally, we also demonstrated the effect of imperfect (delayed or no) CSI at the ENs and users on the NDT for cache-aided wireless networks.

\appendices
\section{Proof of Theorem \ref{th:1}}\label{ap:th1}

To obtain a lower bound on the NDT, we fix a specific request vector $\mathbf{D}$, namely one for which all requested files $(F_1,...,F_K)=F_{[1:K]}$ are different and a given channel realization $\mathbf{H}$. Note that this is possible given the assumption $N\geq K$. For any integer $a$ and $b$ with $a\leq b$, we define the notation $[a:b] = (a,a+1,\ldots,b)$. We denote as $T$ the delivery  time $T^{(\mathbf{D},\mathbf{H})}$ as per Definition \ref{def:pol} of any given feasible policy $\pi = (\pi_c,\pi_e,\pi_d)$ which guarantees a vanishing probability of error $P_e$ as $L\rightarrow\infty$ for the given request $\mathbf{D}$ and channel $\mathbf{H}$. Our goal is to lower bound $T$ in order to obtain a lower bound on the minimum NDT $\delta^*(\mu)$. 

To this end, consider the channel output $\mathbf{Y}_k^{T} = (Y_k[t])_{t=1}^{T}$ at receiver $k$:
\begin{align}\label{eq:chop}
\mathbf{Y}_k^{T}  =  \sum_{m=1}^M h_{km}\mathbf{X}_m^{T} + \mathbf{n}_k^{T},
\end{align}

\n where $\mathbf{X}_m^{T}  = (X_m[t])_{t=1}^{T}$ and $\mathbf{n}_k^{T} = (n_k[t])_{t=1}^{T}$. We consider $\mathbf{Y}_k^{T}, \mathbf{X}_m^{T}$ and $\mathbf{n}_k^{T}$ as $1\times T$ row vectors. The noise $n_k[t] \sim \mathcal{N}(0,1)$ is a zero mean, unit variance Gaussian random variable and is  i.i.d. across time and users. 

For ease of exposition, we next introduce the following notation which we use throughout the appendix. For any integer pair $(a,b)$ with $1\leq a\leq b\leq K$, let $\mathbf{Y}^T_{[a:b]}$ be the $(b-a+1)\times T$ matrix of channel outputs of a subset $\{a,a+1,\ldots,b\}$, of receivers. The notation is also used for the channel inputs $\mathbf{X}^T$ and noise $\mathbf{n}^{T}$. Furthermore, for any integers $1\leq a\leq b \leq K$ and $1\leq c\leq d\leq M$, we define the following sub-matrix of the channel matrix $\mathbf{H}$:
\begin{align*}  
\mathbf{H}_{[a:b]}^{[c:d]} = \begin{bmatrix} h_{a,c} & h_{a,c+1} & \cdots & h_{a,d} \\
																						 h_{a+1,c} & h_{a+1,c+1} & \cdots & h_{a+1,d} \\
																						 \vdots   & \vdots & \ddots & \vdots \\
																						 h_{b,c} & h_{a,c+1} & \cdots & h_{b,d}
																						\end{bmatrix}.
\end{align*}

\n Using this notation, we can represent the channel outputs at all $K$ receivers as 
\begin{align}\label{eq:chop2}
&\mathbf{Y}_{[1:K]}^{T} = \mathbf{H}_{[1:K]}^{[1:M]}~\mathbf{X}_{[1:M]}^{T} +  \mathbf{n}_{[1:K]}^{T},
\end{align}

To obtain the lower bound on NDT, we make the following key observation, which is illustrated in Fig. \ref{fig:proof}. Given any set of $\ell\leq \min\{M,K\}$ output signals, say $\mathbf{Y}^T_{[1:\ell]}$, and the content of any $(M-\ell)^+$ caches, say $S_{[1:(M-\ell)^+]}$, all transmitted signals $\mathbf{X}^T_{[1:M]}$, and hence also all the files $F_{[1:K]}$, can be resolved in the high-SNR regime. This is because: (\emph{i}) from the cache contents $S_{[1:(M-\ell)^+]}$ one can reconstruct the corresponding inputs $\mathbf{X}^T_{[1:(M-\ell)^+]}$; (\emph{ii}) neglecting the noise in the high-SNR regime, the relationship between the variables $\mathbf{Y}^T_{[1:\ell]}$ and the remaining inputs $\mathbf{X}^T_{[(M-\ell)^+:M]}$ is given almost surely by an invertible linear system as in \eqref{eq:chop}. This intuition is formally stated in Lemma \ref{lem:2} in Appendix \ref{ap:lemma}. We use this argument in the following:
\begin{figure}[t]
\begin{centering}
\includegraphics[width=3.5in, height=2.85in]{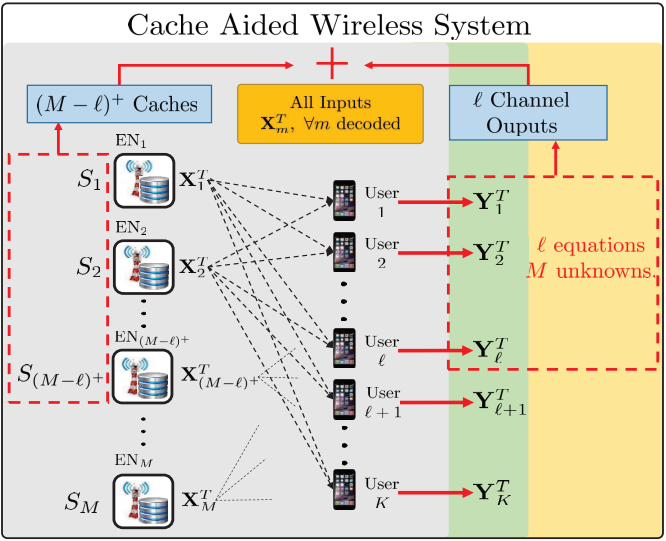}
\par\end{centering}
\protect\caption{Edge-caching set-up for the proof of Theorem \ref{th:1}.}
\label{fig:proof}\vspace{-10pt}
\end{figure}%
\begin{align}\label{eq:p1}
 KL &= H\left(F_{[1:K]}\right) \myeq{(a)} H\left(F_{[1:K]}| F_{[K+1:N]}\right)\nonumber\\
		&= I\left(F_{[1:K]}; \mathbf{Y}^{T}_{[1:\ell]},S_{[1:(M-\ell)^+]}|F_{[K+1:N]}\right) \nonumber\\
		&~~~~+ H\left(F_{[1:K]}|\mathbf{Y}^{T}_{[1:\ell]},S_{[1:(M-\ell)^+]},F_{[K+1:N]} \right),
\end{align}
where $\textsf{(a)}$ follows from the fact that all files $F_{[1:N]}$ are independent. We next upper bound the two terms in \eqref{eq:p1} separately. The first term in \eqref{eq:p1} can be upper bounded as follows: \vspace{5pt}
\begin{align}\label{eq:A}
\hspace{-25pt} &I\left(F_{[1:K]}; \mathbf{Y}^{T}_{[1:\ell]},S_{[1:(M-\ell)^+]}|F_{[K+1:N]}\right) \nonumber\\
& =I\left(F_{[1:K]}; \mathbf{Y}^{T}_{[1:\ell]} |F_{[K+1:N]}\right)  \nonumber\\
&~~~~~~~+ I\left(F_{[1:K]}; S_{[1:(M-\ell)^+]}|\mathbf{Y}^{T}_{[1:\ell]},F_{[K+1:N]}\right) \nonumber\\
&\leq I\left(F_{[1:K]}; \mathbf{Y}^{T}_{[1:\ell]} |F_{[K+1:N]}\right)  \nonumber\\
&~~~~~~~+ I\left(F_{[1:K]}; S_{[1:(M-\ell)^+]},F_{[1:\ell]}|\mathbf{Y}^{T}_{[1:\ell]},F_{[K+1:N]}\right) \nonumber\\
& = I\left(F_{[1:K]}; \mathbf{Y}^{T}_{[1:\ell]} |F_{[K+1:N]}\right)  \nonumber\\
&~~~~~~~+ I\left(F_{[1:K]}; F_{[1:\ell]}|\mathbf{Y}^{T}_{[1:\ell]},F_{[K+1:N]}\right) \nonumber\\
&~~~~~~~+ I\left(F_{[1:K]}; S_{[1:(M-\ell)^+]}|\mathbf{Y}^{T}_{[1:\ell]},F_{[1:\ell]\cup[K+1:N]}\right) \nonumber\\
&\myleq{(a)} I\left(F_{[1:K]}; \mathbf{Y}^{T}_{[1:\ell]} |F_{[K+1:N]}\right) + H\left( F_{[1:\ell]}|\mathbf{Y}^{T}_{[1:\ell]} \right)\nonumber\\
&~~~~~~~+ H\left(S_{[1:(M-\ell)^+]}|\mathbf{Y}^{T}_{[1:\ell]},F_{[1:\ell]\cup[K+1:N]}\right) \nonumber\\
&~~~~~~~- H\left(S_{[1:(M-\ell)^+]}|\mathbf{Y}^{T}_{[1:\ell]},F_{[1:N]}\right)\nonumber\\
& \myleq{(b)} h\left(\mathbf{Y}^{T}_{[1:\ell]}\right) + L\epsilon_L + H\left(S_{[1:(M-\ell)^+]}|F_{[1:\ell]\cup[K+1:N]}\right) \nonumber\\
& ~~~- H\left(S_{[1:(M-\ell)^+]}|\mathbf{Y}^{T}_{[1:\ell]},F_{[1:N]}\right) - h\left(\mathbf{Y}^{T}_{[1:\ell]}|F_{[1:N]}  \right) \nonumber\\
&\myleq{(c)} \ell T\log\Big(2\pi e \big(\Lambda P + 1\big)\Big)  - h\left(\mathbf{n}^{T}_{[1:\ell]}| F_{[1:N]}\right)  \nonumber\\
&~~~~~~~~~+\sum_{i=1}^{\Scale[0.6]{(M-\ell)^+}} H\Big(S_{i,[1:N]}|F_{[1:\ell]},F_{[K+1:N]}\Big) + L\epsilon_L\nonumber\\
&\myleq{(d)} \ell T\log\Big(2\pi e \big(\Lambda P + 1\big)\Big) - \ell T\log(2\pi e) \nonumber\\
&~~~~~~~~~~~~~~~~~~~~+ \sum_{i=1}^{\Scale[0.6]{(M-\ell)^+}}\sum_{j=1}^{\Scale[0.6]{(K-\ell)^+}}H(S_{i,j}) + L\epsilon_L\nonumber\\
&\leq \ell T\log \big(\Lambda P + 1\big) + (M-\ell)^+(K-\ell)^+ \mu L + L\epsilon_L,
\end{align}\vspace{2.5pt}

\n where, the steps in \eqref{eq:A} are explained as follows:
\begin{itemize}
\item Step \textsf{(a)} follows from the fact that conditioning reduces entropy.
\item Step \textsf{(b)} follows from the fact that $\mathbf{Y}^{T}_{[1:\ell]}$ are continuous random variables and that dropping the conditioning in the first term increases entropy. We apply Fano's inequality to the second term where $\epsilon_L$ is a function, independent of $P$, which vanishes as $L\rightarrow \infty$.
\item Step \textsf{(c)} can be explained as follows. The first term is upper bounded by the use of Lemma \ref{lem:1} stated below. The parameter $\Lambda$ is a constant dependent only on the channel parameters and is defined in Lemma \ref{lem:1}. The third term is zero since the cache contents $S_{[1:(M-\ell)^+]}$ are functions of the files $F_{[1:N]}$. Moreover, given all the files, the channel outputs are a function of the channel noise at each receiver.
\item Step \textsf{(d)} follows from the fact that the channel noise is i.i.d. across time and distributed as $\mathcal{N}(0,1)$. 
\end{itemize}

\n Next, the second term in \eqref{eq:p1} can be upper bounded as follows:
\begin{align}\label{eq:termB}
&H\left(F_{[1:K]}|\mathbf{Y}^{T}_{[1:\ell]},S_{[1:(M-\ell)^+]},F_{[K+1:N]} \right) \nonumber\\
& ~~~~~~~~~~~~~\leq~~ L\epsilon_L + T\log\det \left( \mathbf{I}_{\Scale[0.7]{[K-\ell]}} + \tilde{\mathbf{H}}\tilde{\mathbf{H}}^H\right),
\end{align}
where $\epsilon_L$ is a function, independent of $P$, that vanishes as $L\rightarrow \infty$; and the matrix $\tilde{\mathbf{H}}$ is square matrix of dimension $(K-\ell)\times (K-\ell)$, which is a function solely of the channel matrix $\mathbf{H}$. The matrix $\mathbf{I}_{\Scale[0.7]{[K-\ell]}}$ is a $(K-\ell)\times (K-\ell)$ identity matrix. We note that the second term in \eqref{eq:termB} is constant independent of the signal power $P$ and the file size $L$. The proof of inequality \eqref{eq:termB} is relegated to Lemma \ref{lem:termB} given in Appendix \ref{ap:lemma}.

 Using \eqref{eq:A} and \eqref{eq:termB} in \eqref{eq:p1}, we get
\begin{align}\label{eq:p2}
& KL \leq \ell T\log \left((\Lambda P + 1)\right) + (M-\ell)^+(K-\ell)^+ \mu L + L\epsilon_L \nonumber\\
&~~~~~~+ T\log\det \left( \mathbf{I}_{\Scale[0.7]{[K-\ell]}} + \tilde{\mathbf{H}}\tilde{\mathbf{H}}^H\right).\nonumber\\
&= \ell T\log(P) \left[1 + \frac{ \ell\log\left(\Lambda + \frac{1}{P}\right) +  \log\det \left( \mathbf{I}_{\Scale[0.7]{[K-\ell]}} + \tilde{\mathbf{H}}\tilde{\mathbf{H}}^H\right) }{\ell\log(P)} \right] \nonumber\\
&~~~~~~~+ (M-\ell)^+(K-\ell)^+ \mu L + L\epsilon_L
\end{align}

Rearranging \eqref{eq:p2}, we have
\begin{align}\label{eq:p3}
& \frac{T\log(P)}{L}\left[1 + \frac{ \ell\log\left(\Lambda + \frac{1}{P}\right) +  \log\det \left( \mathbf{I}_{\Scale[0.7]{[K-\ell]}} + \tilde{\mathbf{H}}\tilde{\mathbf{H}}^H\right)  }{\ell\log(P)} \right] \nonumber\\
&~~~~~~~\geq \frac{K - (M-\ell)^+(K-\ell)^+ \mu  - \epsilon_L}{\ell}.
\end{align}
Now, using \eqref{eq:p3}, we first take the limit of ${L\rightarrow\infty}$ such that $\epsilon_L \rightarrow 0$  as $P_e \rightarrow 0$. Further, taking the limit ${P\rightarrow \infty}$, for the high-SNR regime, we have
\begin{align}\label{eq:final}
&\delta^*(\mu) \geq  \lim_{\substack{P\rightarrow\infty \\ L\rightarrow\infty}}\frac{T/L}{1/(\log P)} \geq \frac{K - (M-\ell)^+(K-\ell)^+ \mu }{\ell},
\end{align}
in which we have used the fact that the second term within the square brackets in \eqref{eq:p3} vanishes under the limit of $P\rightarrow\infty$. Optimizing the bound in \eqref{eq:final} over all possible choices of $\ell = 1,\ldots,\min\{M,K\}$ completes the proof of Theorem \ref{th:1}.

\section{Lemmas used in Appendix \ref{ap:th1}}\label{ap:lemma}
In this section, we prove the lemmas used in the proof of Theorem \ref{th:1}. First, we state and prove Lemma \ref{lem:1} which was used in \eqref{eq:A} in Appendix \ref{ap:th1}.
\begin{lemma}\label{lem:1}
For the cache-aided wireless network under consideration, the differential entropy of any $\ell$ channel outputs $\mathbf{Y}^{T}_{[1:\ell]}$ can be upper bounded as
\begin{align}\label{eq:lem1}
h\left(\mathbf{Y}^{T}_{[1:\ell]}\right) \leq  \ell{T}\log \Big(2\pi e \left( \Lambda P + 1\right)\Big),
\end{align}
where the parameter $\Lambda$ is a function of the channel coefficients in $\mathbf{H}$ and is defined as 
\begin{align*}
\Lambda = \left(\max\limits_{k\in \{1,\ldots,\ell\}} \left[\sum_{m=1}^M h_{km}^2 + \sum_{m\neq \tilde{m}}h_{km}h_{k\tilde{m}}\right] \right).\vspace{-3pt}
\end{align*}
\end{lemma}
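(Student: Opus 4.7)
The plan is to prove the bound using two classical tools: the independence bound on joint differential entropy (obtained from the chain rule plus ``conditioning reduces entropy''), and the fact that among all continuous distributions with a given second moment, the Gaussian has the largest differential entropy. Together these reduce the claim to a purely second-moment calculation, after which I would bound the variance of each scalar output $Y_k[t]$ using the power constraint in \eqref{eq:powcon}.

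First, I would write
\begin{align*}
h\left(\mathbf{Y}^{T}_{[1:\ell]}\right) \leq \sum_{k=1}^{\ell}\sum_{t=1}^{T} h\left(Y_k[t]\right)
\end{align*}
by repeatedly dropping conditioning on previously revealed coordinates, and then bound each $h(Y_k[t])$ by the entropy of a Gaussian with the same variance, yielding a $\log\!\bigl(2\pi e\,\mathrm{Var}(Y_k[t])\bigr)$-type bound (the conventional $\tfrac{1}{2}$ factor is absorbed into the stated constant, which is loose but does not affect the high-SNR DoF bookkeeping carried out in the main proof of Theorem \ref{th:1}). The entire task thus reduces to showing that $\mathrm{Var}(Y_k[t]) \leq \Lambda P + 1$ for every $k$ and $t$.

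Next, I would expand $Y_k[t] = \sum_{m=1}^M h_{km}X_m[t] + n_k[t]$. The noise $n_k[t]$ is zero-mean, unit-variance and independent of the signals, so it contributes exactly the $+1$. The remaining term is a quadratic form in the covariance matrix of $X_{[1:M]}[t]$, with weights $h_{km} h_{k\tilde m}$. For the diagonal entries I would invoke the power constraint directly to get $\mathrm{Var}(X_m[t]) \leq P$; for the off-diagonal (cross-covariance) entries I would apply the Cauchy--Schwarz inequality, $|\mathrm{Cov}(X_m[t], X_{\tilde m}[t])| \leq \sqrt{\mathrm{Var}(X_m[t])\,\mathrm{Var}(X_{\tilde m}[t])} \leq P$. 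Pulling $P$ out and maximizing the resulting bracketed sum over $k \in \{1,\ldots,\ell\}$ recovers exactly the constant $\Lambda$ stated in the lemma.

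I do not anticipate a major obstacle; the only subtlety is keeping track of the means carefully, since the power constraint in \eqref{eq:powcon} is stated in \emph{central} form. This is precisely why the variance (rather than the second moment) is the right quantity to bound, and it ensures the $+1$ term arising from the noise is not inflated by any mean contribution of the transmitted codewords. Summing the resulting per-coordinate bound over the $\ell T$ scalar outputs then yields $h(\mathbf{Y}^{T}_{[1:\ell]}) \leq \ell T \log\!\bigl(2\pi e(\Lambda P + 1)\bigr)$, completing the proof.
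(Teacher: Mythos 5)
Your proposal is correct and follows essentially the same route as the paper's proof: bound the joint differential entropy by the sum of per-symbol entropies, invoke the Gaussian maximum-entropy bound for a fixed variance, and then bound $\mathrm{Var}(Y_k[t])$ via the power constraint on the diagonal terms and Cauchy--Schwarz on the cross-covariances, maximizing over $k$ to obtain $\Lambda$. Even your remark about the $1/2$ factor and the central form of the power constraint mirrors the conventions adopted in the paper's own derivation.
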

\begin{proof}
The entropy of the received signals $\mathbf{Y}^{T}_{[1:\ell]}$ can be upper bounded as follows:
\begin{align}\label{eq:mi_ub1}
 h\left(\mathbf{Y}_{[1:\ell]}^{T}\right) \leq \sum_{k=1}^\ell \sum_{t=1}^{T} h\Big({Y}_k[t]\Big).
\end{align}
Now, we upper bound the inner sum as follows:
\begin{align}\label{eq:sub1}
&\sum_{t=1}^{T} h\Big({Y}_k[t]\Big) = \sum_{t=1}^{T} h\left( \sum_{m=1}^M h_{km}X_m[t] + n_k[t] \right) \nonumber\\
&\leq \sum_{t=1}^T \log \left(2\pi e ~\text{Var}\left[\sum_{m=1}^M h_{km}X_m[t] + n_k[t] \right]\right)\nonumber\\
& \myeq{(a)} \sum_{t=1}^{T} \log \left(2\pi e \left(\text{Var}\left[\sum_{m=1}^M h_{km}X_m[t]\right] + \text{Var}\left[n_k[t]\right] \right)\right)\nonumber\\
& \myeq{(b)} \sum_{t=1}^{T} \log \Bigg(2\pi e \Bigg(\sum_{m=1}^M h_{km}^2\text{Var}\left[X_m[t]\right] \nonumber\\
& ~~~~~~~~~~~~~~~~~+  \sum_{m\neq \tilde{m}}h_{km}h_{k\tilde{m}}\text{Cov}(X_m[t],X_{\tilde{m}}[t]) + 1 \Bigg) \Bigg)\nonumber\\
& \myleq{(c)} \sum_{t=1}^{T} \log \Bigg(2\pi e \Bigg(\sum_{m=1}^M h_{km}^2\text{Var}\left[X_m[t]\right]  \nonumber\\
&~~~~~~~~~~~+  \sum_{m\neq  \tilde{m}}h_{km}h_{k\tilde{m}}\sqrt{\text{Var}[X_m[t]]\text{Var}[X_{\tilde{m}}[t]]} + 1 \Bigg)\Bigg)\nonumber\\
& \myleq{(d)} \sum_{t=1}^{T} \log \left(2\pi e \left(\sum_{m=1}^M h_{km}^2 P  +  \sum_{m\neq  \tilde{m}}h_{km}h_{k\tilde{m}} P + 1 \right)\right)\nonumber \\
& = \sum_{t=1}^{T} \log \Big(2\pi e \big( {\Lambda} P + 1 \big)\Big)  = T\log \Big(2\pi e \big( {\Lambda} P + 1 \big)\Big) 
\end{align} 

\n where ${\Lambda} = \max_{k\in \{1,\ldots,\ell\}}\left[\sum_{m=1}^M h_{km}^2 + \sum_{m\neq  \tilde{m}}h_{km}h_{k\tilde{m}}\right]$. Step \textsf{(a)} in \eqref{eq:sub1} follows from the fact that noise is i.i.d. and uncorrelated with the input symbols; Step \textsf{(b)} follows from the fact that $\text{Var}\left[n_k[t]\right]=1$; Step \textsf{(c)} follows from the Cauchy-Schwartz Inequality; and step \textsf{(d)} follows from (\ref{eq:powcon}). Substituting (\ref{eq:sub1}) into (\ref{eq:mi_ub1}), we have
\begin{align}
h\left(\mathbf{Y}^{T}_{[1:\ell]}\right) 
&~\leq~  \sum_{k=1}^\ell T \log  \Big(2\pi e \big( {\Lambda} P + 1 \big)\Big)  \nonumber\\
&~=~\ell{T}\log  \Big(2\pi e \big( {\Lambda} P + 1 \big)\Big) , \label{eq:mi_ub}
\end{align}
which completes the proof of the Lemma \ref{lem:1}.
\end{proof}

Next, we state and prove Lemma \ref{lem:termB}, which used in \eqref{eq:termB} in the proof of Theorem \ref{th:1}.
\begin{lemma}\label{lem:termB}
For the cache-aided wireless network under consideration, for any feasible policy $\pi = (\pi_c,\pi_e,\pi_d)$, the entropy of the $K$ requested files $F_{[1:K]}$, conditioned on the channel outputs $\mathbf{Y}^{T}_{[1:\ell]}$, on any $(M-\ell)^+$ cache contents $S_{[1:(M-\ell)^+]}$ and on the remaining files $F_{[K+1:M]}$, can be upper bounded as 
\begin{align}
&H\left(F_{[1:K]}|\mathbf{Y}^{T}_{[1:\ell]},S_{[1:(M-\ell)^+]},F_{[K+1:N]} \right) \nonumber\\
& ~~~~~~~~~~~~~\leq~~ L\epsilon_L + T\log\det \left( \mathbf{I}_{\Scale[0.7]{[K-\ell]}} + \tilde{\mathbf{H}}\tilde{\mathbf{H}}^H\right),
\end{align}
where $\epsilon_L$ is a function of the probability of error $P_e$ that vanishes as $L\rightarrow \infty$, the matrix $\tilde{\mathbf{H}}$ is a function solely of the channel matrix $\mathbf{H}$ and $\mathbf{I}_{\Scale[0.7]{[K-\ell]}}$ is a $(K-\ell)\times(K-\ell)$ identity matrix.
\end{lemma}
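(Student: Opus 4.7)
The plan is a genie-aided augmentation: condition on the additional channel outputs $\mathbf{Y}^T_{[\ell+1:K]}$ missing from the original set, apply Fano's inequality to the augmented entropy, and charge the information cost as a Gaussian differential entropy dictated by the residual MIMO covariance.

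With $\mathcal{U} := (\mathbf{Y}^T_{[1:\ell]}, S_{[1:(M-\ell)^+]}, F_{[K+1:N]})$, the chain rule yields
\begin{align*}
H(F_{[1:K]} | \mathcal{U}) &= H(F_{[1:K]} | \mathcal{U}, \mathbf{Y}^T_{[\ell+1:K]}) \\
&\quad + I(F_{[1:K]}; \mathbf{Y}^T_{[\ell+1:K]} | \mathcal{U}).
\end{align*}
Because $\mathcal{U} \cup \mathbf{Y}^T_{[\ell+1:K]}$ contains $\mathbf{Y}^T_{[1:K]}$, the first term is upper-bounded by $H(F_{[1:K]} | \mathbf{Y}^T_{[1:K]})$, which is at most $L\epsilon_L$ by Fano's inequality applied to each of the $K$ (distinct, since $N\geq K$) requests. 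For the mutual information, write $I(\cdot) = h(\mathbf{Y}^T_{[\ell+1:K]} | \mathcal{U}) - h(\mathbf{Y}^T_{[\ell+1:K]} | \mathcal{U}, F_{[1:K]})$; conditioning on all $N$ files determines the inputs $\mathbf{X}^T_{[1:M]}$ through the edge-encoding functions, so the subtracted term collapses to the i.i.d.\ noise entropy $h(\mathbf{n}^T_{[\ell+1:K]})$.

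The crux is bounding $h(\mathbf{Y}^T_{[\ell+1:K]} | \mathcal{U})$. Using $S_{[1:(M-\ell)^+]}$ to reconstruct $\mathbf{X}^T_{[1:(M-\ell)^+]}$ via the encoding policy and subtracting its contribution from both output blocks, form residuals $\tilde{\mathbf{Y}}^T_{[1:\ell]}$ and $\tilde{\mathbf{Y}}^T_{[\ell+1:K]}$ that are linear in the remaining inputs $\mathbf{X}^T_{[(M-\ell)^++1:M]}$. Denoting by $H_1 = \mathbf{H}_{[1:\ell]}^{[(M-\ell)^++1:M]}$ the $\ell\times\ell$ sub-block faced by the first $\ell$ receivers (almost surely invertible since $\ell \leq \min\{M,K\}$ and channel coefficients are drawn from a continuous distribution) and by $H_2$ the corresponding $(K-\ell)\times\ell$ block for the remaining receivers, eliminate the inputs to obtain
\begin{align*}
\tilde{\mathbf{Y}}^T_{[\ell+1:K]} - \tilde{\mathbf{H}}\, \tilde{\mathbf{Y}}^T_{[1:\ell]} = -\tilde{\mathbf{H}}\,\mathbf{n}^T_{[1:\ell]} + \mathbf{n}^T_{[\ell+1:K]},
\end{align*}
where $\tilde{\mathbf{H}} := H_2 H_1^{-1}$ is a deterministic function of $\mathbf{H}$ alone. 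The right-hand side is i.i.d.\ Gaussian across time with per-sample covariance $I_{K-\ell} + \tilde{\mathbf{H}}\tilde{\mathbf{H}}^H$. Since conditioning reduces differential entropy, the unconditional entropy of this Gaussian upper-bounds $h(\mathbf{Y}^T_{[\ell+1:K]} | \mathcal{U})$. The additive $(K-\ell)T$ constants from the Gaussian entropy formula then cancel against $h(\mathbf{n}^T_{[\ell+1:K]})$, leaving exactly the claimed $T\log\det(I_{K-\ell} + \tilde{\mathbf{H}}\tilde{\mathbf{H}}^H)$.

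The main subtlety I expect is in this last step: the residual noise $-\tilde{\mathbf{H}}\mathbf{n}^T_{[1:\ell]} + \mathbf{n}^T_{[\ell+1:K]}$ shares $\mathbf{n}^T_{[1:\ell]}$ with the conditioning variable $\tilde{\mathbf{Y}}^T_{[1:\ell]}$, so one cannot treat them as independent and must invoke ``conditioning reduces differential entropy'' rather than a cleaner independence-based equality. Once that is in place, the Gaussian maximum-entropy principle applies directly to a zero-mean vector with covariance $I_{K-\ell} + \tilde{\mathbf{H}}\tilde{\mathbf{H}}^H$, and the remaining verification is routine bookkeeping of additive constants.
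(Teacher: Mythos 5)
Your proof is correct, and it reaches the bound by a genuinely different decomposition from the paper's. The paper first injects $\mathbf{X}^T_{[1:(M-\ell)^+]}$ into the conditioning, splits $H\left(F_{[1:K]}\mid\cdot\right)$ by the chain rule into $H\left(F_{[1:\ell]}\mid\mathbf{Y}^T_{[1:\ell]}\right)$ (killed by Fano) plus a term involving only $F_{[\ell+1:K]}$, and then handles the latter by constructing, via Lemma \ref{lem:2}, the noisy observation $\mathbf{Y}^T_{[\ell+1:K]}+\tilde{\mathbf{n}}^T_{[\ell+1:K]}$ as a deterministic function of the conditioning variables, invoking data processing, a second Fano step, and a mutual-information rearrangement that leaves $h\left(\mathbf{n}^T_{[\ell+1:K]}+\tilde{\mathbf{n}}^T_{[\ell+1:K]}\right)-h\left(\mathbf{n}^T_{[\ell+1:K]}\right)$. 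You instead run a genie argument: expand $H\left(F_{[1:K]}\mid\mathcal{U}\right)=H\left(F_{[1:K]}\mid\mathcal{U},\mathbf{Y}^T_{[\ell+1:K]}\right)+I\left(F_{[1:K]};\mathbf{Y}^T_{[\ell+1:K]}\mid\mathcal{U}\right)$, dispose of the first term with a single Fano application on all $K$ received signals, and price the genie at $h\left(\mathbf{Y}^T_{[\ell+1:K]}\mid\mathcal{U}\right)-h\left(\mathbf{n}^T_{[\ell+1:K]}\right)$. The computational core is shared: your elimination through $\tilde{\mathbf{H}}=\mathbf{H}_2\mathbf{H}_1^{-1}$, leaving the residual $\mathbf{n}^T_{[\ell+1:K]}-\tilde{\mathbf{H}}\mathbf{n}^T_{[1:\ell]}$ with per-column covariance $\mathbf{I}+\tilde{\mathbf{H}}\tilde{\mathbf{H}}^H$, is exactly the content of Lemma \ref{lem:2} (and $-\tilde{\mathbf{H}}\mathbf{n}^T_{[1:\ell]}$ has the same Gaussian law as the paper's $\tilde{\mathbf{n}}^T_{[\ell+1:K]}$, so the constants coincide). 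What your route buys is economy: one Fano step instead of two, no auxiliary observation or Markov-chain/data-processing detour, and the subtlety you correctly flag, that the residual noise is correlated with the conditioning through $\mathbf{n}^T_{[1:\ell]}$ so one must use ``conditioning reduces differential entropy'' rather than independence, is the same inequality the paper applies in step (a) of \eqref{eq:c12}. The only points worth stating explicitly in a final write-up are (i) that $h\left(\mathbf{Y}^T_{[\ell+1:K]}\mid\mathcal{U},F_{[1:K]}\right)=h\left(\mathbf{n}^T_{[\ell+1:K]}\right)$ requires noting that, given all files, the conditioning on $\mathbf{Y}^T_{[1:\ell]}$ amounts to conditioning on $\mathbf{n}^T_{[1:\ell]}$, which is independent of $\mathbf{n}^T_{[\ell+1:K]}$, and (ii) that both differential entropies in the mutual-information expansion are finite (guaranteed by the unit-variance noise and the power constraint), so the expansion is legitimate.
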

\begin{proof}
In order to prove this lemma, we first consider the following set of inequalities:
\begin{align}\label{eq:B}
						 &H\left(F_{[1:K]}|\mathbf{Y}^{T}_{[1:\ell]},S_{[1:(M-\ell)^+]},F_{[K+1:N]} \right) \nonumber\\
						 &\myeq{(a)}H\left(F_{[1:K]}|\mathbf{Y}^{T}_{[1:\ell]},S_{[1:(M-\ell)^+]},\mathbf{X}^T_{[1:(M-\ell)^+]},F_{[K+1:N]} \right) \nonumber\\
						 &\myleq{(b)} H\left(F_{[1:K]}|\mathbf{Y}^{T}_{[1:\ell]},\mathbf{X}^T_{[1:(M-\ell)^+]},F_{[K+1:N]} \right)\nonumber\\
						 &\myleq{(c)} H\left(F_{[1:\ell]}|\mathbf{Y}^{T}_{[1:\ell]}\right)	\nonumber\\
						 &~~~~~~~~~+ H\left(F_{[\ell+1:K]}|\mathbf{Y}^{T}_{[1:\ell]},\mathbf{X}^T_{[1:(M-\ell)^+]},F_{[1:\ell]},F_{[K+1:N]} \right)\nonumber\\
						 &\myleq{(d)} L\epsilon_L + H\left(F_{[\ell+1:K]}|\mathbf{Y}^{T}_{[1:\ell]},\mathbf{X}^T_{[1:(M-\ell)^+]},F_{[1:\ell]\cup[K+1:N]} \right),
\end{align}
\n where the steps in \eqref{eq:B} are explained as follows:
\begin{itemize}
\item Step \textsf{(a)} follows from the fact that the channel inputs $\mathbf{X}^T_{[1:(M-\ell)^+]}$ are functions of the cache contents $S_{[1:(M-\ell)^+}$.
\item Step \textsf{(b)} follows from the fact that the channel inputs $\mathbf{X}^T_{[1:(M-\ell)^+]}$ are functions of the cache contents $S_{[1:(M-\ell)^+]}$.
\item Step \textsf{(c)} follows from the chain rule of entropy and from the fact that conditioning reduces entropy; In step $\textsf{(d)}$, we use Fano's inequality on the first term where $\epsilon_L$ is a function, independent of $P$, that vanishes as $L\rightarrow \infty$. 
\end{itemize}
Next, we consider the second term in \eqref{eq:B}. We have 
\begin{align}\label{eq:c_prel}
& H\left(F_{[\ell+1:K]}|\mathbf{Y}^{T}_{[1:\ell]},\mathbf{X}^T_{[1:(M-\ell)^+]},F_{[1:\ell]\cup[K+1:N]} \right)\nonumber\\
&\myeq{(a)} H\left(F_{[\ell+1:K]}|\mathbf{Y}^{T}_{[1:\ell]},\mathbf{X}^T_{[1:(M-\ell)^+]},\mathbf{n}^T_{[\ell+1:K]},F_{[1:\ell]\cup[K+1:N]} \right)\nonumber\\
&\myleq{(b)} H\Big(F_{[\ell+1:K]}| \mathbf{Y}^T_{[\ell+1:K]} + \tilde{\mathbf{n}}^T_{[\ell+1:K]},\mathbf{Y}^{T}_{[1:\ell]},F_{[1:\ell]\cup[K+1:N]} \Big)\nonumber\\
& \myleq{(c)} H\Big(F_{[\ell+1:K]}| \mathbf{Y}^T_{[\ell+1:K]} + \tilde{\mathbf{n}}^T_{[\ell+1:K]},F_{[1:\ell]\cup[K+1:N]} \Big)\nonumber\\
						&  ~~~- H\Big(F_{[\ell+1:K]}|\mathbf{Y}^T_{[\ell+1:K]},F_{[1:\ell]\cup[K+1:N]} \Big) \nonumber\\
						&  ~~~~~~+  H\Big(F_{[\ell+1:K]}|\mathbf{Y}^T_{[\ell+1:K]},F_{[1:\ell]\cup[K+1:N]} \Big)\nonumber\\
						&\myleq{(d)}  H\Big(F_{[\ell+1:K]}| \mathbf{Y}^T_{[\ell+1:K]} + \tilde{\mathbf{n}}^T_{[\ell+1:K]},F_{[1:\ell]\cup[K+1:N]} \Big)\nonumber\\
						&  ~~~- H\Big(F_{[\ell+1:K]}|\mathbf{Y}^T_{[\ell+1:K]},F_{[1:\ell]\cup[K+1:N]} \Big) + L\epsilon_L 
\end{align}
where the steps in \eqref{eq:c_prel} are explained as follows:
\begin{itemize}
\item Step \textsf{(a)} follows from the fact that the noise term $\mathbf{n}^T_{[\ell+1:K]}$ is independent of all the other random variables in the entropy term and can be introduced into the conditioning. 
\item In Step \textsf{(b)}, we use Lemma \ref{lem:2} stated in Appendix \ref{ap:lemma} and the fact that conditioning reduces entropy. We observe that $\mathbf{n}^T_{[\ell+1:K]} \rightarrow (\mathbf{Y}^{T}_{[1:\ell]},\mathbf{X}^T_{[1:(M-\ell)^+]},F_{[1:\ell]\cup[K+1:N]}) \rightarrow F_{[\ell+1:K]}$ forms a Markov chain and as a result, the data-processing inequality \cite{cover} applies.
The additive noise term $\tilde{\mathbf{n}}^T_{[\ell+1:K]}$ is defined as  
\begin{align*}
\tilde{\mathbf{n}}^T_{[\ell+1:K]} = \left(\mathbf{H}_2 \cdot {\mathbf{H}_1}^{\dagger}\right)\mathbf{n}^T_{[1:\ell]},
\end{align*}
which is a $[K-\ell]\times T$ matrix, where each column is an independent Gaussian random vector distributed as $\mathcal{N}\left(0,\tilde{\mathbf{H}}\tilde{\mathbf{H}}^H\right)$ with $\tilde{\mathbf{H}} = \left(\mathbf{H}_2 \cdot {\mathbf{H}_1}^{\dagger}\right)$, where the matrices $\mathbf{H}_1$ and $\mathbf{H}_2$ are sub-matrices of the channel matrix $\mathbf{H}$ and are defined in Lemma \ref{lem:2} (see \eqref{eq:h1h2}), and $\mathbf{H_1}^{\dagger}$ is the Moore-Penrose pseudo-inverse. We note here that the noise term $\tilde{\mathbf{n}}^T_{[\ell+1:K]}$ is independent of channel inputs $\mathbf{X}^T_{[1:M]}$ and noise terms $\mathbf{n}^T_{[\ell+1:K]}$.
\item Step \textsf{(c)} follows from the fact that conditioning reduces entropy.
\item Step \textsf{(d)} follows from applying Fano's inequality to the last entropy term in the previous step, where $\epsilon_L$ is again, a function independent of $P$ that vanishes as $L\rightarrow\infty$. 
\end{itemize}
Now, from \eqref{eq:c_prel}, considering the first and second entropy terms together we have:
\begin{align}\label{eq:c12}
& H\Big(F_{[\ell+1:K]}| \mathbf{Y}^T_{[\ell+1:K]} + \tilde{\mathbf{n}}^T_{[\ell+1:K]},F_{[1:\ell]\cup[K+1:N]} \Big)\nonumber\\
&~~~~~~~~~~ -  H\Big(F_{[\ell+1:K]}|\mathbf{Y}^T_{[\ell+1:K]},F_{[1:\ell]\cup[K+1:N]} \Big) \nonumber\\
& = I\left(F_{[\ell+1:K]};\mathbf{Y}^T_{[\ell+1:K]} |F_{[1:\ell]\cup[K+1:N]} \right)  \nonumber\\
&~~~~~~~~~~- I\left(F_{[\ell+1:K]}; \mathbf{Y}^T_{[\ell+1:K]} + \tilde{\mathbf{n}}^T_{[\ell+1:K]}|F_{[1:\ell]\cup[K+1:N]} \right)\nonumber\\
& = h\left( \mathbf{Y}^T_{[\ell+1:K]} + \tilde{\mathbf{n}}^T_{[\ell+1:K]}|F_{[1:N]} \right) \nonumber\\
&~~~~~~~~~~- h\left( \mathbf{Y}^T_{[\ell+1:K]} + \tilde{\mathbf{n}}^T_{[\ell+1:K]}|F_{[1:\ell]\cup[K+1:N]} \right)\nonumber\\
&~~~~~~~ + h\left(\mathbf{Y}^T_{[\ell+1:K]} |F_{[1:\ell]\cup[K+1:N]} \right) - h\left(\mathbf{Y}^T_{[\ell+1:K]}|F_{[1:N]} \right)\nonumber\\
& \myleq{(a)} h\left( \mathbf{Y}^T_{[\ell+1:K]} + \tilde{\mathbf{n}}^T_{[\ell+1:K]}|F_{[1:N]} \right) \nonumber\\
&~~~~~~~~~~- h\left( \mathbf{Y}^T_{[\ell+1:K]} + \tilde{\mathbf{n}}^T_{[\ell+1:K]}|\tilde{\mathbf{n}}^T_{[\ell+1:K]},F_{[1:\ell]\cup[K+1:N]} \right)\nonumber\\
&~~~~~~~ + h\left(\mathbf{Y}^T_{[\ell+1:K]} |F_{[1:\ell]\cup[K+1:N]} \right) - h\left(\mathbf{Y}^T_{[\ell+1:K]}|F_{[1:N]} \right)\nonumber\\
& = h\left( \mathbf{Y}^T_{[\ell+1:K]} + \tilde{\mathbf{n}}^T_{[\ell+1:K]}|F_{[1:N]} \right) - h\left(\mathbf{Y}^T_{[\ell+1:K]}|F_{[1:N]} \right)\nonumber\\
&\myeq{(b)}  h\Big(\mathbf{n}^T_{[\ell+1:K]} + \tilde{\mathbf{n}}^T_{[\ell+1:K]} \Big) - h\Big(\mathbf{n}^T_{[\ell+1:K]}\Big)\nonumber\\
&\myeq{(c)} T\log \Big((2\pi e)^{K-\ell} \left| \mathbf{I}_{\Scale[0.7]{[K-\ell]}} + \tilde{\mathbf{H}}\tilde{\mathbf{H}}^H\right|\Big) - T\log \Big((2\pi e)^{K-\ell}\Big) \nonumber\\
&= T\log\det \left( \mathbf{I}_{\Scale[0.7]{[K-\ell]}} + \tilde{\mathbf{H}}\tilde{\mathbf{H}}^H\right).
\end{align}
\n The steps in \eqref{eq:c12} are explained as follows:
\begin{itemize}
\item Step \textsf{(a)} follows from the fact that conditioning reduces entropy.
\item Step \textsf{(b)} follows from the fact that, given all the files $F_{[1:N]}$, the channel outputs are functions of the channel noise.
\item Step \textsf{(c)} follows from the fact that the noise terms are jointly Gaussian and are i.i.d. across time $T$. The function $|\cdot|$ is the determinant.  
\end{itemize}
Thus, using \eqref{eq:c_prel} and \eqref{eq:c12} in \eqref{eq:B}, we have
\begin{align}\label{eq:C}
&H\left(F_{[1:K]}|\mathbf{Y}^{T}_{[1:\ell]},S_{[1:(M-\ell)^+]},F_{[K+1:N]} \right) \nonumber\\
& ~~~~~~~~~~~~~\leq~~ L\epsilon_L + T\log\det \left( \mathbf{I}_{\Scale[0.7]{[K-\ell]}} + \tilde{\mathbf{H}}\tilde{\mathbf{H}}^H\right),
\end{align}
which completes the proof of the Lemma \ref{lem:termB}.
\end{proof} 
Finally, we state and prove Lemma \ref{lem:2} which was used in \eqref{eq:c_prel} for the proof of Lemma \ref{lem:termB}.
\begin{lemma}\label{lem:2}
Given any $\ell \in 1,2,\ldots,\min\{N,K\}$, there exists a (deterministic) function of the channel outputs $\mathbf{Y}^T_{[1:\ell]}$, input symbols $\mathbf{X}^T_{[1:(M-\ell)]^+}$ and channel noise $\mathbf{n}^T_{[\ell+1:K]}$, that yields
\begin{align}\label{eq:lem2}
\mathbf{Y}^T_{[\ell+1:K]} + \tilde{\mathbf{n}}^T_{[\ell+1:K]}, 
\end{align}
where we have defined $\tilde{\mathbf{n}}^T_{[\ell+1:K]} = \left(\mathbf{H}_2 \cdot {\mathbf{H}_1}^{\dagger}\right)\mathbf{n}^T_{[1:\ell]}$ and $\mathbf{H_1}^{\dagger}$ is the Moore-Penrose pseudo-inverse. The matrices $\mathbf{H}_1$ and $\mathbf{H}_2$ are sub-matrices of the channel matrix $\mathbf{H}$ and are defined as 
\begin{align}\label{eq:h1h2}
&\mathbf{H}_1 = \mathbf{H}^{[1:\ell]}_{[(M-\ell)^+ +1:M]};~~~ \mathbf{H}_2 = \mathbf{H}^{[\ell+1:K]}_{[(M-\ell)^+ +1:M]}.
\end{align}
\end{lemma}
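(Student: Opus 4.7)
The plan is to construct the required deterministic function by direct linear-algebraic manipulation of the channel law in \eqref{eq:chop2}. First I would partition the aggregate channel matrix at the first $\ell$ receivers as $\mathbf{H}^{[1:\ell]}_{[1:M]}=[\mathbf{A}\mid\mathbf{H}_1]$ with $\mathbf{A}\triangleq\mathbf{H}^{[1:\ell]}_{[1:(M-\ell)^+]}$, and correspondingly at receivers $\ell{+}1,\ldots,K$ as $\mathbf{H}^{[\ell+1:K]}_{[1:M]}=[\mathbf{B}\mid\mathbf{H}_2]$ with $\mathbf{B}\triangleq\mathbf{H}^{[\ell+1:K]}_{[1:(M-\ell)^+]}$. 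Since the inputs $\mathbf{X}^T_{[1:(M-\ell)^+]}$ are available to the function, I can strip their contribution from the first $\ell$ observations to obtain $\mathbf{Z}^T_{[1:\ell]}\triangleq\mathbf{Y}^T_{[1:\ell]}-\mathbf{A}\mathbf{X}^T_{[1:(M-\ell)^+]}=\mathbf{H}_1\mathbf{X}^T_{[(M-\ell)^++1:M]}+\mathbf{n}^T_{[1:\ell]}$, a quantity that depends only on the supplied arguments and on the channel realization $\mathbf{H}$.

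Next I would apply $\mathbf{H}_2\mathbf{H}_1^{\dagger}$ to $\mathbf{Z}^T_{[1:\ell]}$. For the range $\ell\leq\min\{M,K\}$ used in Appendix \ref{ap:th1}, the block $\mathbf{H}_1$ is square of size $\ell\times\ell$, and because the channel coefficients are drawn i.i.d.\ from a continuous distribution, $\mathbf{H}_1$ is non-singular almost surely; in particular $\mathbf{H}_1^{\dagger}\mathbf{H}_1=\mathbf{I}_\ell$. This yields $\mathbf{H}_2\mathbf{H}_1^{\dagger}\mathbf{Z}^T_{[1:\ell]}=\mathbf{H}_2\,\mathbf{X}^T_{[(M-\ell)^++1:M]}+\tilde{\mathbf{n}}^T_{[\ell+1:K]}$, with $\tilde{\mathbf{n}}^T_{[\ell+1:K]}$ precisely as defined in the lemma. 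Adding $\mathbf{B}\mathbf{X}^T_{[1:(M-\ell)^+]}$ (computable from the supplied $\mathbf{X}^T_{[1:(M-\ell)^+]}$ and $\mathbf{H}$) together with the supplied noise $\mathbf{n}^T_{[\ell+1:K]}$ completes the decomposition of $\mathbf{Y}^T_{[\ell+1:K]}$, so the explicit function is
\begin{align*}
f\bigl(\mathbf{Y}^T_{[1:\ell]},\mathbf{X}^T_{[1:(M-\ell)^+]},\mathbf{n}^T_{[\ell+1:K]}\bigr)&=\mathbf{H}_2\mathbf{H}_1^{\dagger}\bigl(\mathbf{Y}^T_{[1:\ell]}-\mathbf{A}\mathbf{X}^T_{[1:(M-\ell)^+]}\bigr)\\
&\quad+\mathbf{B}\mathbf{X}^T_{[1:(M-\ell)^+]}+\mathbf{n}^T_{[\ell+1:K]},
\end{align*}
which depends on the channel realization $\mathbf{H}$ but not on the files, cache contents, or any additional randomness, as required.

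The only delicate point is the generic-position argument justifying the identity $\mathbf{H}_1^{\dagger}\mathbf{H}_1=\mathbf{I}$; this is why the lemma is invoked ``for almost every $\mathbf{H}$,'' consistent with the ``with probability $1$'' feasibility condition in Definition \ref{def:pol} and with the outer computation in Appendix \ref{ap:th1}. I would also note that the stated range ``$\ell\in\{1,\ldots,\min\{N,K\}\}$'' appears to be a typographical slip for $\min\{M,K\}$, since this is what matches Theorem \ref{th:1}; the construction nevertheless extends verbatim to any $\ell$ for which $\mathbf{H}_1$ has full column rank almost surely, so the pseudo-inverse behaves as a genuine left inverse. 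Beyond this structural observation the argument is a routine substitution requiring no information-theoretic inequality, and I do not anticipate further technical subtleties.
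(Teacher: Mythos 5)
Your proof is correct and takes essentially the same route as the paper's: cancel the known inputs $\mathbf{X}^T_{[1:(M-\ell)^+]}$ from the first $\ell$ outputs, exploit the almost-sure invertibility of the $\ell\times\ell$ block $\mathbf{H}_1$, and reassemble $\mathbf{Y}^T_{[\ell+1:K]}$ together with the correlated noise $\tilde{\mathbf{n}}^T_{[\ell+1:K]}$. The only cosmetic difference is that the paper writes the final step as $\mathbf{H}_3$ applied to a stacked vector containing $\mathbf{H}_1^{\dagger}\tilde{\mathbf{Y}}^T_{[1:\ell]}$, whereas you split it into the blocks $\mathbf{B}\,\mathbf{X}^T_{[1:(M-\ell)^+]}+\mathbf{H}_2\mathbf{H}_1^{\dagger}\mathbf{Z}^T_{[1:\ell]}$; the algebra is identical, and your remark that $\min\{N,K\}$ should read $\min\{M,K\}$ is also correct.
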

\begin{proof}
Given any $\ell \in 1,2,\ldots,\min\{M,K\}$, from \eqref{eq:chop2}, the channel outputs $\mathbf{Y}^T_{[1:\ell]}$ are a function of the $M$ input symbols $\mathbf{X}^T_{[1:M]}$ and of the noise $\mathbf{n}^T_{[1:\ell]}$. Given the input symbols $\mathbf{X}^T_{[1:(M-\ell)^+]}$, we can cancel the contribution of these input symbols from the channel outputs $\mathbf{Y}^T_{[1:\ell]}$ to obtain
\setcounter{equation}{44}
\begin{align}\label{eq:ty}
\tilde{\mathbf{Y}}^T_{[1:\ell]} &= \mathbf{H}^{[1:M]}_{[1:\ell]}
 \mathbf{X}_{[1:M]}^{T}  +  \mathbf{n}_{[1:\ell]}^{T}  - \mathbf{H}^{[1:M]}_{[1:\ell]}
\begin{bmatrix} \mathbf{X}_{[1:(M-\ell)^+]}^{T} \\ \mathbf{0}^T_{[(M-\ell)^+ +1:M]} \end{bmatrix}\nonumber\\
&  = \mathbf{H}_1 \begin{bmatrix} \mathbf{X}_{[(M-\ell)^+ + 1:M]}^{T} \end{bmatrix} + \begin{bmatrix} \mathbf{n}_{[1:\ell]}^{T}\end{bmatrix},
\end{align}
where $\mathbf{0}^T_{[(M-\ell)^+ +1:M]}$ is an $\ell \times T$ matrix of zeros. As a result, multiplying both sides of \eqref{eq:ty} by $\mathbf{H_1}^{\dagger}$, we get
\begin{align}\label{eq:X}
\mathbf{H_1}^{\dagger}\tilde{\mathbf{Y}}^T_{[1:\ell]}  =  \mathbf{X}_{[(M-\ell)^+ + 1:M]}^{T} + \mathbf{H_1}^{\dagger}\mathbf{n}_{[1:\ell]}^{T}.
\end{align}
Now let
\begin{align}
\mathbf{H}_3 = \mathbf{H}^{[1:M]}_{[\ell+1:K]}.
\end{align}
Using this definition, we have
\begin{align}\label{eq:last}
&\mathbf{Y}_{[\ell + 1:K]}^{T}
= \mathbf{H}_3 \mathbf{X}_{[1:M]}^{T} + \mathbf{n}_{[\ell+1:K]}^{T}\nonumber\\
& = \mathbf{H}_3
\begin{bmatrix}\mathbf{X}_{[1:(M-\ell)^+]}^{T}\\ \mathbf{H_1}^{\dagger}\tilde{\mathbf{Y}}^T_{[1:\ell]} - \mathbf{H_1}^{\dagger}\mathbf{n}^{T}_{[1:\ell]} \end{bmatrix} 
+ \mathbf{n}_{[\ell+1:K]}^{T} \nonumber\\
& \myeq{(a)} \mathbf{H}_3 \begin{bmatrix}\mathbf{X}_{[1:(M-\ell)^+]}^{T}\\ \mathbf{H_1}^{\dagger}\tilde{\mathbf{Y}}^T_{[1:\ell]} \end{bmatrix} 
-  \mathbf{H}_3 \begin{bmatrix} \mathbf{0}^T_{[1:(M-\ell)^+]}\\ \mathbf{H_1}^{\dagger}\mathbf{n}^{T}_{[1:\ell]}\end{bmatrix} 
+ \mathbf{n}_{[\ell+1:K]}^{T} \nonumber\\
& = \mathbf{H}_3 \begin{bmatrix}\mathbf{X}_{[1:(M-\ell)^+]}^{T}\\ \mathbf{H_1}^{\dagger}\tilde{\mathbf{Y}}^T_{[1:\ell]} \end{bmatrix} 
-  \mathbf{H}_2 \begin{bmatrix} \mathbf{H_1}^{\dagger}\mathbf{n}^{T}_{[1:\ell]}\end{bmatrix} 
+ \mathbf{n}_{[\ell+1:K]}^{T},
\end{align}
where, in \textsf{(a)}, $\mathbf{0}^T_{[1:(M-\ell)^+]}$ is a $[(M-\ell)^+]\times T$ matrix of zeros. Rearranging \eqref{eq:last}, we obtain
\begin{align}
\mathbf{Y}^T_{[\ell+1:K]} + \tilde{\mathbf{n}}^T_{[\ell+1:K]}  &=  \mathbf{H}_3 \begin{bmatrix}\mathbf{X}_{[1:(M-\ell)^+]}^{T}\\ \mathbf{H_1}^{\dagger}\tilde{\mathbf{Y}}^T_{[1:\ell]} \end{bmatrix} + \mathbf{n}_{[\ell+1:K]}^{T},
\end{align}
where the RHS is a function of the $\ell$ channel outputs $\mathbf{Y}^T_{[1:\ell]}$, input symbols $\mathbf{X}^T_{[1:(M-\ell)]^+}$ and channel noise $\mathbf{n}^T_{[\ell+1:K]}$.
This completes the proof Lemma \ref{lem:2}. Note that we assumed in \eqref{eq:X} that the sub-matrix $\mathbf{H}_1$ is invertible, which is true for almost all channel realizations, i.e., it is true with probability $1$.
\end{proof}

\balance
\bibliographystyle{IEEEtran}
\bibliography{CISS_arXiv_Final}

\end{document}